\begin{document}
\title{Energy Group Buying with Loading Sharing for Green Cellular Networks}
\author{Jie Xu,~Lingjie Duan,~and Rui Zhang
\thanks{J. Xu is with the Engineering Systems and Design Pillar, Singapore University of Technology and Design and the State Key Laboratory of Rail Traffic Control and Safety, Beijing Jiaotong University (e-mail:~jiexu.ustc@gmail.com).}
\thanks{L. Duan is with the Engineering Systems and Design Pillar, Singapore University of Technology and Design (e-mail:~lingjie\_duan@sutd.edu.sg).}
\thanks{R. Zhang is with the Department of Electrical and Computer Engineering, National University of Singapore (e-mail: elezhang@nus.edu.sg). He is also with the Institute for Infocomm Research, A*STAR, Singapore.}
}

\maketitle
\begin{abstract}
In the emerging hybrid electricity market, mobile network operators (MNOs) of cellular networks can make day-ahead energy purchase commitments at low prices and real-time flexible energy purchase at high prices. To minimize electricity bills, it is essential for MNOs to jointly optimize the day-ahead and real-time energy purchase based on their time-varying wireless traffic load. In this paper, we consider two different MNOs coexisting in the same area, and exploit their collaboration in both energy purchase and wireless load sharing for energy cost saving. Specifically, we propose a new approach named energy group buying with load sharing, in which the two MNOs are aggregated as a single group to make the day-ahead and real-time energy purchase, and their base stations (BSs) share the wireless traffic to maximally turn lightly-loaded BSs into sleep mode. When the two MNOs belong to the same entity and aim to minimize their total energy cost, we use the two-stage stochastic programming to obtain the optimal day-ahead and real-time energy group buying jointly with wireless load sharing. When the two MNOs belong to different entities and are self-interested in minimizing their individual energy costs, we propose a novel repeated Nash bargaining scheme for them to negotiate and share their energy costs under energy group buying and load sharing. Our proposed repeated Nash bargaining scheme is shown to achieve Pareto-optimal and fair energy cost reductions for both MNOs.
\end{abstract}
\begin{keywords}
Cellular networks, hybrid electricity market, energy group buying, loading sharing, repeated Nash bargaining.
\end{keywords}

\newtheorem{definition}{\underline{Definition}}[section]
\newtheorem{fact}{Fact}
\newtheorem{assumption}{Assumption}
\newtheorem{theorem}{\underline{Theorem}}[section]
\newtheorem{lemma}{\underline{Lemma}}[section]
\newtheorem{corollary}{\underline{Corollary}}[section]
\newtheorem{proposition}{\underline{Proposition}}[section]
\newtheorem{example}{\underline{Example}}[section]
\newtheorem{remark}{\underline{Remark}}[section]
\newtheorem{algorithm}{\underline{Algorithm}}[section]
\newcommand{\mv}[1]{\mbox{\boldmath{$ #1 $}}}

\section{Introduction}

To meet the explosive wireless traffic growth driven by the popularity of new mobile devices (e.g., smart phones and tablets) and new mobile applications (such as social networking), the fifth-generation (5G) cellular technology has recently attracted a lot of research interests from both academia and industry (see, e.g., \cite{AndrewsWhat2014,IToward2014,CaiChe2015}). It is expected that 5G should achieve roughly 1000 times data rate increase as compared to its fourth-generation (4G) counterpart. However, the growing data throughput will lead to large energy consumption and high electricity bills for mobile network operators (MNOs). For example, the total energy cost of China Mobile, a Chinese state-owned telecommunication company, is almost 3 billion US dollars in the year of 2011 \cite{HasanGreen2011}, and is still increasing considerably year by year. Therefore, how to reduce the energy cost of cellular networks while ensuring subscribers' growing communication requirements is very important for the technological and economical success of 5G, and it is predicted in \cite{AndrewsWhat2014} that the energy cost per bit will need to fall by at least 100 times for 5G.

Recently, the electricity grids are also experiencing a paradigm shift from conventional grids to smart grids \cite{FangSmart2012,Rahbar2015}. Unlike conventional grids using fixed energy pricing, the newly deployed smart grids enable grid operators to charge time-varying prices to cope with electricity consumers' time-varying load, thus helping stabilize the energy generation and transmission \cite{FangSmart2012}. Specifically, a hybrid electricity market of smart grids has been successfully implemented in more and more countries (such as United States and Norway \cite{OttExperience2003,PhilpottOptimizing2006}), and such a hybrid market combines a day-ahead energy market and a real-time energy market. In the day-ahead energy market, electricity consumers can make commitment on tomorrow's energy purchase at low prices; whereas in the real-time energy market, they are free from commitment and can flexibly buy energy at high prices or sell back the excessive energy commitment at prices lower than the day-ahead ones. In view of the new hybrid electricity market, it is essential for MNOs to jointly optimize the day-ahead and real-time energy purchase based on their time-varying wireless traffic load, for the purpose of minimizing their energy costs.

In this paper, we consider a scenario with coexisting MNOs in the same area,\footnote{One practical example of such a scenario is the coexistence of different state-owned MNOs (i.e., China Mobile, China Unicom, and China Telecom) in China.} in which each MNO participates in the day-ahead and real-time energy markets as an electricity consumer, and operates a number of cellular base stations (BSs) to serve their respective subscribers. Under such a scenario, MNOs face two key challenges. First, it may happen that one MNO over-commits while another under-commits the day-ahead energy purchase, thus leading to additional expense in real-time energy trading. Second, cellular BSs of each MNO, which are deployed to meet the peak wireless traffic loads, are under-utilized during the non-peak traffic hours. The under-utilization results in energy inefficiency due to the non-transmission power consumption at BSs for maintaining the routine operation and ensuring the coverage \cite{ArnoldPowerConsumptionModeling2010,AuerHow2011,Peng}. To overcome the two challenges, we are motivated to exploit the collaboration benefit in both energy purchase and wireless load sharing among different MNOs for their energy cost saving.

For the purpose of exposition, in this paper we investigate the collaborative operation of two coexisting MNOs in a daily time horizon including a number of time slots. It is assumed that the real-time energy prices and the BSs' wireless traffic demands are time-varying and can only be partially predicted day-ahead with certain prediction errors. Under this setup, we jointly optimize the day-ahead energy commitment and real-time energy trading at the two MNOs together with their wireless load sharing, so as to minimize their energy costs. To our best knowledge, this paper is the first attempt to investigate the cost-efficient operation of cooperative MNOs in the hybrid energy market, and our proposed energy optimization framework for cellular networks by taking into account the new smart grid features is important to help MNOs meet the 100 times energy cost decrease for 5G. The main results of this paper are summarized as follows.

\begin{itemize}
\item {\it Individual energy buying in the hybrid electricity market.} As a benchmark, we first consider the non-cooperative case when two traditional MNOs operate independently to minimize their individual energy costs. We use the two-stage stochastic programming to obtain the optimal day-ahead energy commitment and real-time energy trading for individuals. It is shown that when the day-ahead energy price is larger than the average of the real-time energy buying and selling prices in a particular time slot, it is beneficial for the MNO to under-commit its predicted energy demand for this slot; otherwise the MNO will over-commit.
\item {\it Energy group buying with load sharing for fully-cooperative MNOs.} Next, we consider the fully-cooperative case when the two MNOs belong to the same entity (like Sprint and T-Mobile merged in some states of US \cite{HardySprint}) and minimize their total energy cost. We propose a new approach named energy group buying with load sharing, in which the two MNOs are aggregated as a single group to make the day-ahead and real-time energy purchase, and their BSs share the wireless traffic to maximally turn lightly-loaded BSs into sleep mode. By jointly optimizing the day-ahead and real-time energy group buying as well as the load sharing via the two-stage stochastic programming, the total energy cost of the two MNOs is significantly reduced as compared to the non-cooperative benchmark.
\item {\it Repeated Nash bargaining scheme for self-interested MNOs.}  Finally, when the two MNOs belong to different entities and are self-interested in minimizing their individual energy costs, we propose a novel repeated Nash bargaining scheme for them to negotiate and share their energy costs under energy group buying and load sharing. This scheme includes two stages: in stage I, the two MNOs negotiate about their day-ahead aggregated energy purchase commitment by taking into account the real-time collaboration benefit in the future; while in stage II, the two MNOs negotiate about their real-time energy group buying as well as their wireless load sharing over time. Our proposed repeated Nash bargaining scheme is shown to achieve Pareto-optimal and fair energy cost reductions for both MNOs.
\end{itemize}

In the literature, there have been existing studies investigating the energy purchase of electricity consumers in the hybrid electricity market \cite{PhilpottOptimizing2006,VinyalsCoalitional2012}, in which the energy demands of consumers are assumed to be given. There have also been some works studying the load sharing in a single wireless network (see, e.g.,\cite{IsmailNetwork2011,BaoBayesianICC2015,GaoBargaining2014} and the references therein). In \cite{IsmailNetwork2011,BaoBayesianICC2015}, lightly-loaded BSs are allowed to offload their wireless traffic to co-located or nearby BSs with overlapped coverage, such that they can be turned to sleep mode for energy saving. In \cite{GaoBargaining2014}, load sharing between cellular BSs and WiFi access points (APs) is utilized to improve the payoffs of both the MNO (cellular BSs' owner) and the APs' owner.

In addition, there have been a handful of works \cite{BuWhen2012,HongPower2014,XuCost2015,ChiaEnergy2014,XuCoMP2015,GuoJoint2014,LeithonEnergy2014,XuCooperative2014} investigating the cellular networks powered by smart grids. In \cite{BuWhen2012,HongPower2014}, the authors developed joint optimization frameworks to maximize the utilities of both energy networks and cellular networks. In \cite{XuCost2015,ChiaEnergy2014,XuCoMP2015,GuoJoint2014,LeithonEnergy2014}, the authors considered an energy cooperation technique to improve the cost efficiency of cellular networks, in which distributed BSs are allowed to exchange their locally available (renewable) energy via leveraging two-way energy flows in smart grid infrastructures. In \cite{XuCooperative2014}, the authors considered the two-way energy trading between BSs and smart grids, in which the difference between energy buying and selling prices is exploited to reduce the energy cost at BSs. However, these prior works have not considered the hybrid electricity market that has been practically implemented in many countries, and have only focused on a single MNO case where the BSs are cooperative in nature and can be controlled by a centralizer. In contrast, this paper investigates the practical scenario with more than one self-interested MNOs in the hybrid electricity market, and provides incentive designs to motivate the inter-MNO collaboration. Furthermore, by borrowing insights from economics (e.g., \cite{ChenSegmenting2010}), this is also the first paper to leverage group buying to exploit the collaboration benefit between two MNOs for energy cost saving in the hybrid electricity market, which also requires wireless load sharing between different MNOs as demand management to fit the aggregated supply.

The remainder of this paper is organized as follows. Section \ref{sec:System} presents the hybrid electricity market and wireless traffic models. Section \ref{sec:Non} considers the non-cooperative case when the two MNOs operate independently to minimize their individual energy costs. Section \ref{sec:Fully} proposes the energy group buying with load sharing to minimize the total energy cost of the two MNOs when they are fully-cooperative. Section \ref{sec:Bargaining} develops a novel repeated Nash bargaining scheme for two self-interested MNOs to share their energy costs under energy group buying and load sharing. Section \ref{sec:Numerical} shows the simulation results. Finally, Section \ref{sec:Conclusion} concludes this paper.

\section{System Model}\label{sec:System}

\begin{figure}
\centering
 \epsfxsize=1\linewidth
    \includegraphics[width=8cm]{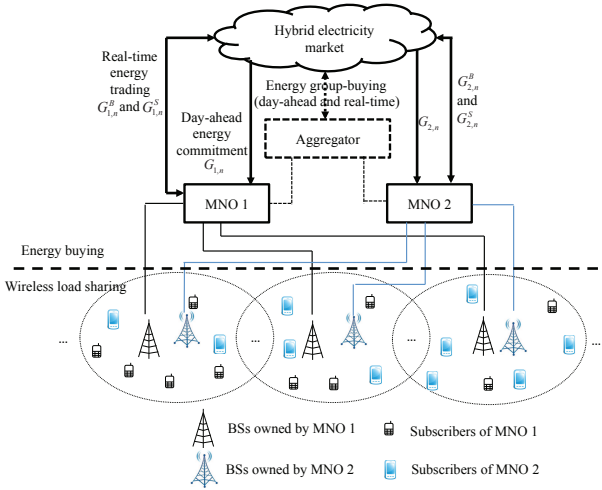}
\caption{System model with two MNOs coexisting in the same area, where each MNO purchases energy from the hybrid electricity market, and operates a number of BSs to serve subscribers.} \label{fig:0}
\end{figure}

\begin{figure}
\centering
 \epsfxsize=1\linewidth
    \includegraphics[width=8cm]{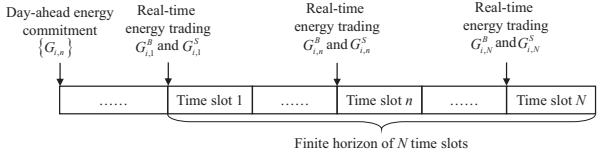}\vspace{-0em}
\caption{Day-ahead energy commitment and real-time energy trading of MNO $i$ over time.} \label{fig:time}\vspace{-0em}
\end{figure}

We consider two MNOs coexisting over the same geographic area (e.g., China Mobile and China Unicom in China) as shown in Fig. \ref{fig:0}. In the upper layer for energy buying in Fig. \ref{fig:0}, each MNO participates in the hybrid electricity market as a consumer. In the lower layer for wireless load sharing in Fig. \ref{fig:0}, each MNO operates a total of $K$ BSs within the area of interest to serve the respective subscribers. It is assumed that the $k$th BSs from the two MNOs (which are denoted by BSs $k_1$ and $k_2$ for MNOs 1 and 2, respectively) are geographically close or co-located to cover the same sub-area,{\footnote{In practice, different MNOs in China have agreed to jointly establish a telecommunications tower company, which will deploy shared tower infrastructures (cell sites including physical space, rooftops, towers, masts and pylons) to enable the co-location of individually deployed BSs by different MNOs \cite{reuters1,MeddourOn2011}. Note that our results can be extended to the case when the neighboring or paired BSs have {\it partially} overlapping coverage, by allowing more sophisticated wireless load sharing among them (e.g., each BS in one MNO may share traffic with multiple nearby BSs in the other MNO).}} $k\in\{1,\ldots,K\}$, and as a result, their wireless traffic can be shared/offloaded with each other. In practice, the pairing of BSs can be formed by the two MNOs to exchange the geographic locations of their BSs.

Due to our modeling of the hybrid electricity market including both the day-ahead and real-time energy purchase, we are interested in a finite horizon of each day consisting of $N>1$ time slots (each having a length of hours or tens of minutes). We consider quasi-static energy price and wireless traffic models, in which energy prices in the electricity market and wireless traffic at each BS remain constant within each time slot but may change from one slot to another. For notational convenience, we assume the length of each time slot is normalized to unity and thus we will use the terms ``energy'' and ``power'' interchangeably throughout this paper unless otherwise stated.

\subsection{Hybrid Electricity Market}

First, consider the upper layer for energy buying in Fig. \ref{fig:0}. In the hybrid electricity market, each MNO $i$ makes decisions on the day-ahead energy commitment and real-time energy trading over time as shown in Fig. \ref{fig:time}, for which the detailes are explained as follows.
\begin{itemize}
  \item {\it Energy purchase commitment in the day-ahead market}: In day-ahead (before the $N$-slot time horizon, e.g., before 4 pm of the previous day in the PJM day-ahead energy market \cite{OttExperience2003}), each MNO makes its energy purchase commitment for the following $N$ time slots based on its predicted energy demands. Let $G_{i,n}\ge 0$ denote the energy commitment by MNO $i\in\{1,2\}$ for time slot $n\in\{1,\ldots,N\}$, and $\alpha_n > 0$ denote the corresponding day-ahead energy price for time slot $n$. In practice, the energy price $\alpha_n$'s in general vary over different time slots (see Fig. \ref{fig:4}), and are announced by the power grid operator to the two MNOs in day-ahead.
  \item {\it  Energy trading in the real-time market}: The day-ahead energy commitments at each MNO may differ from its exact real-time energy demands. To overcome such mismatches, each MNO can buy the energy deficit or sell back the energy surplus in the real-time energy market. Let $G_{i,n}^{B}\ge 0$ and $G_{i,n}^{S}\ge 0$ denote the energy bought from or sold back to the real-time market by MNO $i$ in time slot $n$, and $\alpha^{B}_n \ge 0$ and $\alpha^{S}_n \ge 0$ denote the corresponding real-time energy buying and selling prices, respectively. In each time slot $n$, the real-time energy buying and selling prices are higher and lower than the day-ahead energy price, respectively, i.e., $\alpha^{S}_n <  \alpha_n < \alpha^{B}_n, \forall n\in\{1,\ldots,N\}$ (see Fig. \ref{fig:4}), such that electricity consumers will pay additional cost for the deviation of their real-time energy demands from the day-ahead commitments \cite{PhilpottOptimizing2006}. It is assumed that the real-time energy prices $\alpha^{B}_n$ and $\alpha^{S}_n$ are announced at the beginning of time slot $n$ in real-time but can only be partially predicted by MNOs in day-ahead with certain prediction errors. Let the predicted energy prices in  day-ahead be denoted as $\bar \alpha^{B}_n$ and $\bar \alpha^{S}_n$, respectively. Then we model the actual real-time energy buying and selling prices as
         \begin{align*}
         \alpha^{B}_n &= \bar \alpha^{B}_n + \delta^B_n,\\
         \alpha^{S}_n &= \bar \alpha^{S}_n + \delta^S_n,
        \end{align*}
        where $\delta^B_{n}$ and $\delta^S_n$ denote the prediction errors for the real-time energy buying and selling prices, both of which are modeled as random variables with zero mean and general probability density functions (PDFs) $\psi^B_{n}(\delta^B_{n})$ and $\psi^S_{n}(\delta^S_{n})$, respectively. $\delta^B_{n}$'s and $\delta^S_n$'s are assumed to be independent of each other.
\end{itemize}

\begin{figure}
\centering
 \epsfxsize=1\linewidth
    \includegraphics[width=7cm]{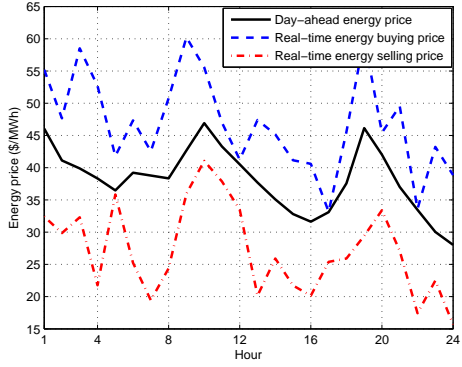}\vspace{-0em}
\caption{Day-ahead energy price $\{\alpha_n\}$ and real-time energy buying and selling prices $\{\alpha_n^B\}$ and $\{\alpha_n^S\}$ over one day, where $\{\alpha_n\}$ are set based on the practical day-ahead energy prices from PJM on March 01, 2015 \cite{PJM:data}.} \label{fig:4}\vspace{-0em}
\end{figure}

By combining the above day-ahead energy commitment and real-time energy trading, the exact energy purchased by MNO $i$ in time slot $n$ is given by $G_{i,n}+G_{i,n}^{B} - G_{i,n}^{S}$, and the corresponding energy cost of MNO $i$ in time slot $n$ is denoted as
\begin{align}
C_{i,n}(G_{i,n},G_{i,n}^B, G_{i,n}^S) &= \alpha_n G_{i,n}+ \alpha^{B}_n G^B_{i,n} - \alpha^{S}_n G^S_{i,n}, \nonumber \\
&~~ i\in\{1,2\},n\in\{1,\ldots,N\}.\label{eqn:cost1}
\end{align}

\begin{figure}
\centering
 \epsfxsize=1\linewidth
    \includegraphics[width=7cm]{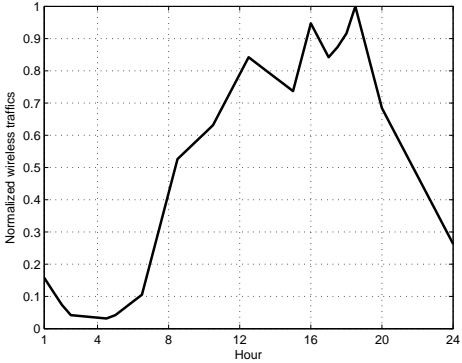}\vspace{-0em}
\caption{Normalized wireless traffic at a particular BS over one day, which is based on the measurement results from practical cellular BSs \cite{WillkommPrimary2009}.} \label{fig:3}\vspace{-0em}
\end{figure}

\subsection{Wireless Traffic Modeling and Sharing/Offloading at BSs}

Next, consider the lower layer for wireless load sharing in Fig. \ref{fig:0}. We first  introduce the wireless traffic models at cellular BSs. In time slot $n\in\{1,\ldots,N\}$, let $D_{k_i,n} > 0$ denote the actual wireless traffic requested by the subscribers of MNO $i\in\{1,2\}$ under the coverage sub-area of BS $k\in\{1,\ldots,K\}$. In general, $D_{k_i,n}$'s fluctuate considerably over time, as shown in Fig. \ref{fig:3}. It is assumed that the wireless traffic $D_{k_i,n}$'s, $\forall k\in\{1,\ldots,K\}, i\in\{1,2\}$, are perfectly known by MNO $i$ at the beginning of time slot $n$, but can only be partially predicted in day-ahead with certain prediction errors. Let $\bar D_{k_i,n} $ denote the predictable wireless traffic in day-ahead at BS $k$ of MNO $i$ (or BS $k_i$) for time slot $n$. Then we have
\begin{align}\label{eqn:D_error}
D_{k_i,n} = \bar D_{k_i,n} + \xi_{k_i,n},
\end{align}
where $\xi_{k_i,n} $ denotes the corresponding wireless traffic prediction error, which is modeled as a random variable with zero mean and a general PDF $\phi_{k_i,n}(\xi_{k_i,n})$. The distribution $\phi_{k_i,n}(\xi_{k_i,n})$'s are assumed to be known by the MNO, which can be practically obtained via gathering the historical data about the predicted and exact wireless traffic. Note that  $\xi_{k_i,n}$'s are assumed to be independent of each other and also independent of the energy price prediction error $\delta^S_n$'s and $\delta^B_{n}$'s.

Next, we present the wireless load sharing between the two MNOs. For each $k$th paired BSs from the two MNOs with the same covered sub-area (i.e., BSs $k_1$ and $k_2$), their wireless traffic can be shared/offloaded with each other (subject to maximum supportable wireless traffic as specified later), and after the load sharing, one of the two BSs (e.g., the lightly loaded one) may be turned into sleep mode. Note that the practical implementation of wireless load sharing between such paired BSs requires their subscribed mobile devices to be equipped with wireless interfaces used by both MNOs (e.g., one MNO uses new 5G techniques while the other MNO uses 4G LTE), which is reasonable for future smart phones and tablets.{\footnote{For example, nowadays iPhone 6 can support almost all cellular techniques including GSM/EDGE, CDMA EV-DO, UMTS/HSPA, TD-SCDMA, FDD-LTE, and TD-LTE \cite{AppleIphone6}.}} Also note that when the BS of one MNO is turned into sleep mode, the paired BS of the other MNO should ensure the coverage in the corresponding sub-area, where subscribers in the former MNO should be roamed to the latter MNO. Let the offloaded traffic from BS $k_i$ to BS $k_{\bar \imath}$ in time slot $n\in\{1,\ldots,N\}$ be denoted by $x_{k_i,n} \ge 0$, where $\bar \imath \in\{1,2\}\backslash \{i\}$ with $i\in\{1,2\}$. Then the wireless traffic served by BS $k_i$ in time slot $n$ is given by
\begin{align}\label{eqn:d_kin}
d_{k_i,n} = D_{k_i,n} - x_{k_i,n} + x_{k_{\bar \imath},n}.
\end{align}
Practically, due to the transmission power and bandwidth limitations, each cellular BS $k_i$ has a maximum supportable wireless traffic, which is denoted by $D^{\max}_{k_i} > 0$. As a result, we have
\begin{align}
d_{k_i,n} \le D^{\max}_{k_i}, \forall k\in\{1,\ldots,K\},i\in\{1,2\},n\in\{1,\ldots,N\}.
\end{align}
Note that the requested traffic load $D_{k_i,n}$ is assumed to be less than $D^{\max}_{k_i}$ by default, i.e., $D_{k_i,n} \le D^{\max}_{k_i},\forall k\in\{1,\ldots,K\},i\in\{1,2\},n\in\{1,\ldots,N\}$, since otherwise the service requests from subscribers would be denied due to the radio access control at BSs.

Based on the served wireless traffic, we then present the energy consumption models at BSs. For a typical cellular BS, when it is in the active mode, its energy consumption consists of two parts \cite{AuerHow2011}. The first part is the transmission energy consumption due to e.g. radio-frequency (RF) chains, which is related to the served wireless traffic at that BS. Specifically, we consider a linear transmission energy consumption model by defining the transmission energy consumed by BS $k_i$ as a product $a_{k_i}d_{k_i,n}$ in time slot $n$, where $a_{k_i} > 0$ denotes the rate of the energy consumption with respect to the traffic load. Here, the linear transmission energy consumption model is consistent with the measurement results in \cite{Peng}. The second part is the non-transmission energy consumption due to e.g. air conditioner and data processor, which is defined as a constant $b_{k_i}>0$ for BS $k_i$. On the other hand, when there is no wireless traffic served at the BS (i.e., $d_{k_i,n} = 0$), it can be turned into the sleep mode for power saving by switching off some of its components, for which the energy consumption is denoted by $c_{k_i}\ge 0$ with $c_{k_i} \ll b_{k_i}, \forall k\in\{1,\ldots,K\}, i\in\{1,2\}$. By combining the active and sleep modes, the energy consumption model for BS $k_i$ is expressed as
\begin{eqnarray}
P_{k_i}(d_{k_i,n})=&\left\{\begin{array}{ll} a_{k_i}d_{k_i,n} + b_{k_i}, & {\rm if}~
d_{k_i,n} >0 \\ c_{k_i}, &{\rm if}~d_{k_i,n} =0 \end{array} \right., \nonumber\\&~~~~~~~~~
k\in\{1,\ldots,K\}, i\in\{1,2\}.\label{eq1}
\end{eqnarray}
Here, the parameters $a_{k_i}$'s, $b_{k_i}$'s, and $c_{k_i}$'s can in general vary over BSs and MNOs because of different types of BSs and cellular techniques employed. Some typical values about these parameters can be referred to in \cite{ArnoldPowerConsumptionModeling2010}.

\subsection{Problem Statement}

Our objective is to minimize the sum energy cost of each MNO $i\in\{1,2\}$ over the $N$ time slots, i.e., $\sum_{n=1}^N C_{i,n}(G_{i,n},G_{i,n}^B, G_{i,n}^S)$ with $C_{i,n}(G_{i,n},G_{i,n}^B, G_{i,n}^S)$ given in (\ref{eqn:cost1}), while ensuring the wireless traffic requirements at BSs in each time slot $n$. Here, the wireless traffic requirement in time slot $n\in\{1,\ldots,N\}$ can be specified as the following constraint:
\begin{align}
G_{i,n}+G_{i,n}^{B} - G_{i,n}^{S} \ge \sum_{k=1}^{K} P_{k_i}(D_{k_i,n} - x_{k_i,n} + x_{k_{\bar \imath},n}),
\label{eqn:onstraints}
\end{align}
where the left-hand-side (LHS) denotes the purchased energy by MNO $i$ from the electricity market in time slot $n$, which should be no smaller than the total energy requirements at all $K$ BSs of MNO $i$, as given in the right-hand-side (RHS) based on (\ref{eqn:d_kin}) and (\ref{eq1}). The decision variables include the day-ahead energy commitment $\{G_{i,n}\}$, the real-time energy trading $\{G_{i,n}^{B}\}$ and $\{G_{i,n}^{S}\}$, and the wireless loading sharing $\{x_{k_i,n}\}$.

\section{No Cooperation Benchmark}\label{sec:NonCoop}\label{sec:Non}

In this section, we focus on the benchmark case when there is no cooperation between two traditional MNOs. In this case, the two MNOs operate independently without any wireless load sharing, i.e., $x_{k_i,n} = 0, \forall k\in\{1,\ldots,K\}, i\in\{1,2\}, n\in\{1,\ldots,N\}$. Then, from (\ref{eqn:d_kin}) and (\ref{eq1}), the served wireless traffic and energy demand of BS $k_i$ in time slot $n$ are $d_{k_i,n} = D_{k_i,n} > 0$ and $P_{k_i}(D_{k_i,n}) = a_{k_i}D_{k_i,n}+ b_{k_i}$, respectively.

We investigate the day-ahead and real-time energy purchase at each MNO to minimize its individual energy cost while ensuring the corresponding wireless traffic requirements. Without loss of generality, we focus on a particular MNO $i\in\{1,2\}$. As shown in Fig. \ref{fig:time}, MNO $i$ makes energy commitment $\{G_{i,n}\}$ in day-ahead (based on uncertain predictions of real-time energy prices and wireless traffic demands), and decides energy trading $G_{i,n}^{B}$ and $G_{i,n}^{S}$ at each time slot $n$ in real-time. Since the decisions are made at different time, we formulate the individual energy cost minimization as a two-stage stochastic programming problem. In stage I (day-ahead), MNO $i$ decides its day-ahead energy purchase commitment (i.e., $\{G_{i,n}\}$) to minimize the expected sum energy cost over the $N$ time slots, i.e.,
\begin{align}
\min\limits_{\{G_{i,n}\ge 0\}} & \mathbb{E}\bigg(\sum_{n=1}^NC_{i,n}^*(G_{i,n}) \bigg), \label{eqn:eq1}
\end{align}
where the expectation is over the real-time energy price prediction error $\delta^B_n$'s and $\delta^S_n$'s, and the wireless traffic prediction error $\xi_{k_i,n}$'s. Here, $C_{i,n}^*(G_{i,n})$ denotes the minimum energy cost under given $G_{i,n}$ in each time slot $n\in\{1,\ldots,N\}$ in stage II (real-time), which is achieved via optimizing the real-time energy trading (i.e., $G_{i,n}^{B}$ and $G_{i,n}^{S}$). That is,
\begin{align}
C_{i,n}^*(G_{i,n}) &= \min\limits_{G^B_{i,n}\ge 0,G^S_{i,n}\ge 0}~ C_{i,n}(G_{i,n},G_{i,n}^B, G_{i,n}^S) \nonumber\\
\mathrm{s.t.}~&G_{i,n}+G_{i,n}^{B}- G_{i,n}^{S} \ge \sum_{k=1}^{K} (a_{k_i}D_{k_i,n}+ b_{k_i}).\label{eqn:problem1}
\end{align}
In (\ref{eqn:problem1}), the constraint on the wireless traffic requirement follows from (\ref{eqn:onstraints}) together with $x_{k_i,n} = 0, \forall k\in\{1,\ldots,K\}, i\in\{1,2\}, n\in\{1,\ldots,N\}$. Note that in stage II, each MNO $i$ implements the real-time energy cost minimization in (\ref{eqn:problem1}) independently for each time slot $n$ (see Fig. \ref{fig:time}), since it can only acquire the exact real-time energy price $\alpha_n^B$ and $\alpha_n^S$ and wireless traffic demand $D_{k_i,n}$'s at the beginning of that time slot.

In the following, we employ the backward induction to solve this two-stage stochastic programming problem, by first obtaining the real-time energy trading solution in stage II (under any given $\{G_{i,n}\}$) and then deriving the day-ahead energy commitment solution in stage I.

\subsection{Real-Time Energy Trading in Stage II}\label{sec:Non:A}

First, we consider the real-time energy trading for MNO $i$ in stage II under any given day-ahead energy purchase commitment $G_{i,n}$'s. Since the MNO makes the real-time trading independently over different time slots, we solve the problem in (\ref{eqn:problem1}) by considering any one particular time slot $n\in\{1,\ldots,N\}$.

Problem (\ref{eqn:problem1}) is indeed a simple linear program (LP), for which we have the following lemma.

\begin{lemma}\label{lemma:1}
The optimal solution to problem (\ref{eqn:problem1}) under given $G_{i,n}$ is given by
\begin{align}
G^{B*}_{i,n}(G_{i,n}) &= \max\bigg(\sum_{k=1}^{K} (a_{k_i}D_{k_i,n}+ b_{k_i}) - G_{i,n}, 0\bigg) \label{eqn:solution1:1}\\
G^{S*}_{i,n}(G_{i,n}) &= \max\bigg(G_{i,n} - \sum_{k=1}^{K} (a_{k_i}D_{k_i,n}+ b_{k_i}), 0\bigg),\label{eqn:solution1:2}
\end{align}
which achieves the minimum real-time energy cost of MNO $i$ in time slot $n$ as
\begin{align}
C_{i,n}^*(G_{i,n}) = &
 \left\{ {\begin{array}{*{20}{c}}
\displaystyle {(\alpha_n-\alpha_n^B) G_{i,n} + \alpha_n^B\sum_{k=1}^{K} (a_{k_i}D_{k_i,n}+ b_{k_i}),}\\~~~~~~~~\displaystyle{{\rm if}~ G_{i,n} \le \sum_{k=1}^{K} (a_{k_i}D_{k_i,n}+ b_{k_i})}\\
\displaystyle
{(\alpha_n-\alpha_n^S) G_{i,n} + \alpha_n^S\sum_{k=1}^{K} (a_{k_i}D_{k_i,n}+ b_{k_i}),}\\~~~~~~~~\displaystyle{{\rm if}~G_{i,n} > \sum_{k=1}^{K} (a_{k_i}D_{k_i,n}+ b_{k_i})}.
\end{array}} \right.
\label{eqn:cost:optimal}
\end{align}
\end{lemma}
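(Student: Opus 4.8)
The plan is to treat problem (\ref{eqn:problem1}) as the elementary linear program it is, and to read off its optimizer using only the price ordering $\alpha_n^S<\alpha_n<\alpha_n^B$. First I would abbreviate the total energy demand $L:=\sum_{k=1}^{K}(a_{k_i}D_{k_i,n}+b_{k_i})$, which is a known constant in stage II since the $D_{k_i,n}$'s are revealed at the start of slot $n$. Because the term $\alpha_n G_{i,n}$ in the objective does not involve the decision variables $G_{i,n}^B,G_{i,n}^S$, minimizing $C_{i,n}$ is equivalent to minimizing $\alpha_n^B G_{i,n}^B-\alpha_n^S G_{i,n}^S$ over the feasible set $\{G_{i,n}^B,G_{i,n}^S\ge 0:\ G_{i,n}^B-G_{i,n}^S\ge L-G_{i,n}\}$.

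The core of the argument is to establish two structural facts about any optimal pair $(G_{i,n}^B,G_{i,n}^S)$, each by a one-line exchange argument. First, the single linear constraint is binding at the optimum, i.e. $G_{i,n}^B-G_{i,n}^S=L-G_{i,n}$: if there were slack and $G_{i,n}^B>0$, decreasing $G_{i,n}^B$ slightly preserves feasibility and lowers the cost by a positive multiple of $\alpha_n^B>0$; if instead $G_{i,n}^B=0$, the slack forces $G_{i,n}^S<G_{i,n}-L$, and increasing $G_{i,n}^S$ slightly lowers the cost by a positive multiple of $\alpha_n^S>0$. Second, the optimum never buys and sells at once, i.e. $G_{i,n}^B G_{i,n}^S=0$: were both positive, decreasing each by $\epsilon=\min(G_{i,n}^B,G_{i,n}^S)>0$ leaves $G_{i,n}^B-G_{i,n}^S$ (and hence feasibility) unchanged while lowering the cost by $(\alpha_n^B-\alpha_n^S)\epsilon>0$, where I use $\alpha_n^B>\alpha_n^S$.

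Combining the binding constraint $G_{i,n}^B-G_{i,n}^S=L-G_{i,n}$ with the complementarity $G_{i,n}^B G_{i,n}^S=0$ and nonnegativity then pins the solution down uniquely: when $G_{i,n}\le L$ one must have $G_{i,n}^S=0$ and $G_{i,n}^B=L-G_{i,n}$, while when $G_{i,n}>L$ one must have $G_{i,n}^B=0$ and $G_{i,n}^S=G_{i,n}-L$, which is exactly (\ref{eqn:solution1:1})--(\ref{eqn:solution1:2}). Substituting each branch back into $\alpha_n G_{i,n}+\alpha_n^B G_{i,n}^B-\alpha_n^S G_{i,n}^S$ and collecting the terms in $G_{i,n}$ yields the two-piece expression (\ref{eqn:cost:optimal}). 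I expect the only point needing care---a very mild obstacle for so simple an LP---to be the clean justification that the constraint binds in both regimes and that one never trades in both directions; both hinge entirely on the strict price ordering, so no genuine computation is required beyond the final substitution.
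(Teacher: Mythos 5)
Your proof is correct and takes essentially the same route as the paper, whose own proof merely asserts that the simple LP in (\ref{eqn:problem1}) "can be easily solved" and then substitutes; your exchange arguments (constraint binds at optimum, never buy and sell simultaneously, hence the unique candidate in each regime) just spell out the details the paper leaves implicit. One tiny caveat: the model only guarantees $\alpha_n^S \ge 0$, so in the degenerate case $\alpha_n^S = 0$ your strict-decrease argument for binding the constraint becomes a weak decrease --- the stated solution (\ref{eqn:solution1:1})--(\ref{eqn:solution1:2}) and the cost (\ref{eqn:cost:optimal}) remain optimal, only uniqueness is lost.
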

\begin{proof}
The optimal solution $G^{B*}_{i,n}(G_{i,n})$ and $G^{S*}_{i,n}(G_{i,n})$ in (\ref{eqn:solution1:1}) and (\ref{eqn:solution1:2}) can be easily obtained by solving the simple LP in (\ref{eqn:problem1}). Then, substituting (\ref{eqn:solution1:1}) and (\ref{eqn:solution1:2}) into the objective function of problem (\ref{eqn:problem1}), $C_{i,n}^*(G_{i,n})$ in (\ref{eqn:cost:optimal}) can be obtained.
\end{proof}

In Lemma \ref{lemma:1}, the optimal solution in (\ref{eqn:solution1:1}) and (\ref{eqn:solution1:2}) intuitively means that if the day-ahead energy commitment $G_{i,n}$ is more than the real-time energy demand $\sum_{k=1}^{K} (a_{k_i}D_{k_i,n}+ b_{k_i})$ of the $K$ BSs in the network, then MNO $i$ needs to buy the energy deficit from the real-time energy market; otherwise, MNO $i$ needs to sell back the excessive energy to the real-time energy market. It is evident from (\ref{eqn:solution1:1}) and (\ref{eqn:solution1:2}) that $G_{i,n}^{B*}(G_{i,n})$ and $G_{i,n}^{S*}(G_{i,n})$ cannot be positive at the same time, i.e., $G_{i,n}^{B*}(G_{i,n}) \cdot G_{i,n}^{S*}(G_{i,n}) = 0$. This means that it is not cost-effective for MNO $i$ to buy and sell energy at the same time in the real-time market. Furthermore, it can be shown that $C_{i,n}^*(G_{i,n})$ is a convex function with respect to the day-ahead energy commitment $G_{i,n}$ over $G_{i,n}\ge 0$, provided that $\alpha_n-\alpha_n^B < 0$ and $\alpha_n-\alpha_n^S>0$ (see an example of $C_{i,n}^*(G_{i,n})$ in Fig. \ref{fig:1}).

\begin{figure}
\centering
 \epsfxsize=1\linewidth
    \includegraphics[width=6cm]{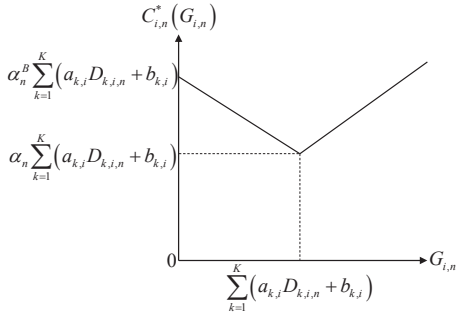}\vspace{-0em}
\caption{An example function $C_{i,n}^*(G_{i,n})$.} \label{fig:1}\vspace{-0em}
\end{figure}

\subsection{Day-Ahead Energy Commitment in Stage I}\label{sec:Non:B}

After obtaining the optimal real-time energy trading $\{G^{B*}_{i,n}(G_{i,n})\}$ and $\{G^{S*}_{i,n}(G_{i,n})\}$ in stage II, we then investigate stage I to obtain the optimal day-ahead energy commitment $\{G_{i,n}\}$, denoted by $\{G_{i,n}^*\}$, to solve problem (\ref{eqn:eq1}).

It is observed in problem (\ref{eqn:eq1}) that the variables $\{G_{i,n}\}$ are decoupling with each other, and as a result, problem (\ref{eqn:eq1}) can be decomposed into the following $N$ sub-problems, each for one time slot $n\in\{1,\ldots,N\}$.
\begin{align}
\min\limits_{G_{i,n}\ge 0} &\mathbb{E}\left(C_{i,n}^*(G_{i,n}) \right)\label{eqn:eq1:decompose}
\end{align}
Problem (\ref{eqn:eq1:decompose}) can be shown to be convex, since the objective function $\mathbb{E}\left(C_{i,n}^*(G_{i,n}) \right)$ (with expectation over the prediction error $\delta^B_n$'s, $\delta^S_n$'s, and $\xi_{k_i,n}$'s) is convex with respect to $G_{i,n}$ over $G_{i,n} \ge 0$, provided that $C_{i,n}^*(G_{i,n})$ is convex over $G_{i,n} \ge 0$ under any given $\delta^B_n$'s, $\delta^S_n$'s, and $\xi_{k_i,n}$'s \cite{convex}. However, standard convex optimization techniques cannot be directly applied to solve (\ref{eqn:eq1:decompose}), since the function $\mathbb{E}\left(C_{i,n}^*(G_{i,n}) \right)$ cannot be computed exactly in general. To overcome this issue, we employ the stochastic subgradient method \cite{Convex2} to solve this problem as follows by first presenting the exact subgradient of $\mathbb{E}\left(C_{i,n}^*(G_{i,n}) \right)$, then using the Monte Carlo method to approximate the subgradient, and finally applying the bisection method.

First, we show the exact subgradient of $\mathbb{E}\left(C_{i,n}^*(G_{i,n}) \right)$ by denoting the energy demand of MNO $i$ in time slot $n$ as $\zeta_{i,n} = \sum_{k=1}^{K} (a_{k_i}D_{k_i,n}+ b_{k_i})$, which is a random variable with the PDF given by $f_{i,n}(\zeta_{i,n})$.{\footnote{As defined in (\ref{eqn:D_error}), we have $D_{k_i,n} = \bar D_{k_i,n} + \xi_{k_i,n}$ with the prediction error $\xi_{k_i,n}$'s being random variables. As a result, $\zeta_{i,n}$ is also a random variable, whose PDF $f_{i,n}(\zeta_{i,n})$ can be obtained based on the PDFs of $\xi_{k_i,n}$'s (i.e., $\phi_{k_i,n}(\xi_{k_i,n})$'s).}} Then we have
\begin{align}
&\mathbb{E}\left(C_{i,n}^*(G_{i,n}) \right) =  (\alpha_n-\bar\alpha_n^S) G_{i,n}
\int_{G_{i,n}}^{\infty} f_{i,n}(\zeta_{i,n}) {\rm{d}}\zeta_{i,n} \nonumber\\&~~~~~~+ (\alpha_n-\bar\alpha_n^B) G_{i,n}  \int_0^{G_{i,n}} f_{i,n}(\zeta_{i,n}) {\rm{d}} \zeta_{i,n} \nonumber\\
&~~~~~~ + \bar \alpha_n^S\int_{G_{i,n}}^{\infty}  \zeta_{i,n} f_{i,n}(\zeta_{i,n}) {\rm{d}}\zeta_{i,n} \nonumber\\&~~~~~~ + \bar \alpha_n^B\int_{0}^{G_{i,n}} \zeta_{i,n} f_{i,n}(\zeta_{i,n}) {\rm{d}}\zeta_{i,n},\label{eqn:EC}
\end{align}
Note that $\mathbb{E}\left(C_{i,n}^*(G_{i,n}) \right)$ is only related to the predicted real-time energy price $\bar\alpha_n^S$ and $\bar\alpha_n^B$, but independent of the prediction error $\delta^S_{n}$'s and $\delta^B_{n}$'s, due to the fact that $\delta^S_{n}$'s and $\delta^B_{n}$'s are of zero mean. From (\ref{eqn:EC}), the exact subgradient of $\mathbb{E}\left(C_{i,n}^*(G_{i,n}) \right)$ can be obtained as
\begin{align}
g_{i,n}(G_{i,n}) =& (\alpha_n-\bar\alpha_n^S)\int_{G_{i,n}}^{\infty} f_{i,n}(\zeta_{i,n}) {\rm{d}}\zeta_{i,n} \nonumber\\&+ (\alpha_n-\bar\alpha_n^B) \int_0^{G_{i,n}} f_{i,n}(\zeta_{i,n}) {\rm{d}} \zeta_{i,n}.\label{eqn:subg}
\end{align}
Here, $g_{i,n}(G_{i,n})$ cannot be obtained directly due to the fact that the PDF $f_{i,n}(\zeta_{i,n})$ cannot be computed exactly in general.

Next, we use the Monte Carlo method to obtain an approximation of the subgradient $g_{i,n}(G_{i,n})$, denoted by $\hat g_{i,n}(G_{i,n})$. Specifically, based on the PDF $\phi_{k_i,n}(\xi_{k_i,n})$'s, we randomly generate $M$ independent samples of the wireless traffic prediction errors $\left\{\xi^{(m)}_{k_i,n}\right\}, m = 1,\ldots,M$, and accordingly compute the corresponding $M$ energy demands as $\zeta_{i,n}^{(m)} = \sum_{k=1}^{K} (a_{k_i}(\bar D_{k_i,n} + \xi_{k_i,n}^{(m)} )+ b_{k_i}), m=1, \ldots, M$. Then we have the approximate subgradient as
\begin{align*}
\hat g_{i,n}(G_{i,n}) = \frac{1}{M}&\bigg((\alpha_n-\bar\alpha_n^S)\sum_{m=1}^M\mv{1}_{\zeta_{i,n}^{(m)} > G_{i,n}} \bigg.\nonumber\\
&~~\bigg. + (\alpha_n-\bar\alpha_n^B) \sum_{m=1}^M\mv{1}_{\zeta_{i,n}^{(m)} \le G_{i,n}}\bigg),
\end{align*}
where $\mv{1}_A$ denotes the indicator function of an event $A$, with $\mv{1}_A = 1$ if $A$ is true, and $\mv{1}_A = 0$ otherwise. Evidently, when $M$ is sufficiently large, the approximate subgradient is a good estimate of the exact one, i.e., $\hat g_{i,n}(G_{i,n}) \approx  g_{i,n}(G_{i,n})$.

Finally, we use the bisection method \cite{convex} based on the approximate subgradient $\hat g_{i,n}(G_{i,n})$ to obtain the optimal day-ahead energy commitment solution $G_{i,n}^*$ to problem (\ref{eqn:eq1:decompose}) numerically. After acquiring the solution $G_{i,n}^*$ to (\ref{eqn:eq1:decompose}) for $n=1\ldots,N$, problem (\ref{eqn:eq1}) in stage I is solved.

To provide more insights on the optimal day-ahead energy trading, we have the following proposition by considering a special case of symmetric wireless traffic prediction errors (i.e., $\phi_{k_i,n}(\xi_{k_i,n}) = \phi_{k_i,n}(-\xi_{k_i,n}), \forall k\in\{1,\ldots,K\},i\in\{1,2\},n\in\{1,\ldots,N\}$).

\begin{proposition}\label{proposition:1}
For the optimal day-ahead energy commitment of MNO $i$ in time slot $n$, it follows that
\begin{itemize}
\item If $\alpha_n = ({\bar\alpha_n^B+\bar\alpha_n^S})/{2}$, then $G_{i,n}^* = \sum_{k=1}^{K} (a_{k_i}\bar D_{k_i,n}+ b_{k_i})$;
\item If $\alpha_n > ({\bar\alpha_n^B+\bar\alpha_n^S})/{2}$, then $G_{i,n}^* < \sum_{k=1}^{K} (a_{k_i}\bar D_{k_i,n}+ b_{k_i})$;
\item Otherwise, $G_{i,n}^* > \sum_{k=1}^{K} (a_{k_i}\bar D_{k_i,n}+ b_{k_i})$.
\end{itemize}
\end{proposition}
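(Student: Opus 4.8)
The plan is to pin down the minimizer of the convex single-slot problem (\ref{eqn:eq1:decompose}) via a first-order condition, and then read off its location by evaluating the subgradient (\ref{eqn:subg}) at the \emph{predicted} demand level. Denote the random demand by $\zeta_{i,n}=\sum_{k=1}^{K}(a_{k_i}D_{k_i,n}+b_{k_i})$ and its mean by $\bar\zeta_{i,n}=\sum_{k=1}^{K}(a_{k_i}\bar D_{k_i,n}+b_{k_i})$, the latter being exactly the quantity on the right-hand side of all three claims. Because $\mathbb{E}\left(C_{i,n}^*(G_{i,n})\right)$ is convex on $G_{i,n}\ge 0$ (established just before the statement), its subgradient $g_{i,n}$ is non-decreasing in $G_{i,n}$, and in fact strictly increasing wherever the demand density $f_{i,n}$ is positive. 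Hence an interior optimum $G_{i,n}^*$ satisfies $g_{i,n}(G_{i,n}^*)=0$, and the sign of $g_{i,n}$ at any test point determines whether $G_{i,n}^*$ lies to its left or right. I would therefore compute $g_{i,n}(\bar\zeta_{i,n})$.

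The key step is evaluating the two tail masses in (\ref{eqn:subg}) at $G_{i,n}=\bar\zeta_{i,n}$ using the symmetry hypothesis $\phi_{k_i,n}(\xi_{k_i,n})=\phi_{k_i,n}(-\xi_{k_i,n})$. From (\ref{eqn:D_error}) one has $\zeta_{i,n}=\bar\zeta_{i,n}+\sum_{k=1}^{K}a_{k_i}\xi_{k_i,n}$, and since the $\xi_{k_i,n}$'s are independent and each symmetric about zero, the aggregate error $\sum_{k=1}^{K}a_{k_i}\xi_{k_i,n}$ is symmetric about zero; thus $\zeta_{i,n}$ is symmetric about its mean $\bar\zeta_{i,n}$. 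Consequently $\bar\zeta_{i,n}$ is a median of $\zeta_{i,n}$, giving $\int_{\bar\zeta_{i,n}}^{\infty}f_{i,n}(\zeta)\,\mathrm{d}\zeta=\int_{0}^{\bar\zeta_{i,n}}f_{i,n}(\zeta)\,\mathrm{d}\zeta=\tfrac12$. Plugging these equal tail masses into (\ref{eqn:subg}) yields $g_{i,n}(\bar\zeta_{i,n})=\tfrac12(\alpha_n-\bar\alpha_n^S)+\tfrac12(\alpha_n-\bar\alpha_n^B)=\alpha_n-\tfrac{\bar\alpha_n^B+\bar\alpha_n^S}{2}$.

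The three cases then drop out from monotonicity of $g_{i,n}$. When $\alpha_n=(\bar\alpha_n^B+\bar\alpha_n^S)/2$ we get $g_{i,n}(\bar\zeta_{i,n})=0$, so $\bar\zeta_{i,n}$ is a stationary point of a convex function and therefore optimal, i.e. $G_{i,n}^*=\bar\zeta_{i,n}$. When $\alpha_n>(\bar\alpha_n^B+\bar\alpha_n^S)/2$ we get $g_{i,n}(\bar\zeta_{i,n})>0$; since $g_{i,n}$ is strictly increasing and vanishes at the interior optimum, $G_{i,n}^*<\bar\zeta_{i,n}$ (and if no interior zero exists the optimum is at the boundary $G_{i,n}^*=0<\bar\zeta_{i,n}$). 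The remaining case $\alpha_n<(\bar\alpha_n^B+\bar\alpha_n^S)/2$ is entirely symmetric and forces $G_{i,n}^*>\bar\zeta_{i,n}$.

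I expect the only genuinely delicate point to be the median identity. It rests on two standard facts — that an independent sum of zero-mean symmetric random variables is again symmetric about zero, and that a distribution symmetric about its mean has that mean as a median — which I would state explicitly. A secondary technical caveat is that the physical positivity $D_{k_i,n}>0$ truncates the support of $\zeta_{i,n}$ to $(0,\infty)$ and could in principle perturb exact symmetry; consistent with the model I would assume the prediction errors almost surely keep $\zeta_{i,n}>0$, so that the density on $(0,\infty)$ integrates to one and is symmetric about $\bar\zeta_{i,n}$, making the median identity exact. Everything else — convexity of the objective and hence monotonicity of $g_{i,n}$ — is already in place from the preceding subsection.
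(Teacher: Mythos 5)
Your proposal is correct and follows essentially the same route as the paper's own proof in Appendix A: use the symmetry of the prediction errors to show the density of $\zeta_{i,n}$ is symmetric about $\bar\zeta_{i,n}=\sum_{k=1}^{K}(a_{k_i}\bar D_{k_i,n}+b_{k_i})$, evaluate the subgradient $g_{i,n}$ there so that its sign is that of $\alpha_n-(\bar\alpha_n^B+\bar\alpha_n^S)/2$, and conclude by convexity. Your explicit median identity (tail masses equal to $\tfrac12$) and the boundary/truncation caveats are minor refinements of what the paper does implicitly, not a different argument.
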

\begin{IEEEproof}
See Appendix  \ref{app:A}.
\end{IEEEproof}

Proposition \ref{proposition:1} intuitively shows that when the day-ahead energy price is larger than the average of the predicted real-time energy buying and selling prices, it is beneficial for the MNO to under-commit its predicted energy demand; while otherwise, it is desirable for the MNO to over-commit.

\section{Full Cooperation Through Energy Group Buying and Wireless Load Sharing}\label{sec:Fully}

In this section, we consider that the two MNOs are cooperative in nature (e.g., belonging to the same entity like Sprint and T-Mobile merged in some states of US \cite{HardySprint}), and fully cooperate in minimizing their total energy cost. Specifically, we propose a new approach named energy group buying associated with wireless load sharing, as described in detail as follows.

\subsection{Problem Formulation with Energy Group Buying}\label{sec:CooEneTrading}

Energy group buying is an approach that allows the two MNOs to share their day-ahead energy commitments and real-time energy trading. This approach is implemented with the assistance of an aggregator (see Fig. \ref{fig:0}), which is an entity commonly employed in smart girds to aggregate and control the demands at different electricity consumers (e.g., BSs of the two MNOs here) \cite{GkatzikisAggregator}. Via signing a contract with the aggregator, the two MNOs' BSs can be aggregated as a single group, and the aggregator can serve as an intermediary party to control the group of BSs to purchase energy from the hybrid electricity market. For convenience, under the energy group buying, we denote $G_{n} \ge 0$ as the aggregated day-ahead energy purchase commitment for the two MNOs in time slot $n$, and $G_{n}^B \ge 0$ and $G_{n}^S \ge 0$ as the aggregated real-time energy buying and selling amounts, respectively. Accordingly, the wireless traffic requirements at the two MNOs in (\ref{eqn:onstraints}) can be combined as
\begin{align}
G_{n}+G_{n}^{B} - G_{n}^{S} \ge \sum_{i=1}^2 \sum_{k=1}^{K} P_{k_i}\left(D_{k_i,n}-x_{k_i,n}+x_{k_{\bar \imath},n}\right),\label{eqn:15}
\end{align}
which means that the aggregated energy purchase of the two MNOs should be no smaller than the total energy demands of their BSs (with load sharing). Note that we also have the following energy group buying constraints for the two MNOs to share the day-ahead and real-time energy purchase.
\begin{align}
G_{1,n} + G_{2,n} &= G_{n}, \forall n\in\{1,\ldots,N\}\label{eqn:cooperative1}\\
G_{1,n}^B + G_{2,n}^B &= G_{n}^B, \forall n\in\{1,\ldots,N\}\label{eqn:cooperative2}\\
G_{1,n}^S + G_{2,n}^S &= G_{n}^S, \forall n\in\{1,\ldots,N\} \label{eqn:cooperative3}
\end{align}
Accordingly, the total energy cost of the two MNOs in time slot $n$ is expressed as
\begin{align}
C_{{\rm TC},n}(G_{n},G_{n}^B, G_{n}^S)& =\sum_{i=1}^2 C_{i,n}(G_{i,n},G_{i,n}^B, G_{i,n}^S) \nonumber\\& = \alpha_n G_{n}+ \alpha^{B}_n G^B_{n} - \alpha^{S}_n G^S_{n},\label{eqn:Ctotal2}
\end{align}
where $C_{i,n}(G_{i,n},G_{i,n}^B, G_{i,n}^S)$ is the energy cost of MNO $i$ in time slot $n$ as given in (\ref{eqn:cost1}).

Next, we consider the total energy cost minimization problem for the two MNOs. Similar to the non-cooperative case (see Fig. \ref{fig:time}), the day-ahead and real-time energy group buying decisions are made at different time. Therefore, we formulate the total energy cost minimization problem as a two-stage stochastic programming problem. In stage I (day-ahead), the two MNOs decide their day-ahead energy group buying (i.e., $\{G_{n}\}$) to minimize their expected sum energy cost over the $N$ time slots:{\footnote{We use the superscript $**$ to denote the optimal solution for the fully-cooperative case in this section, to distinguish from the superscript $*$ for the non-cooperative case in Section \ref{sec:Non}.}}
\begin{align}\label{eqn:day:ahead2}
\min_{\{G_{n}\ge 0\}}  \mathbb{E}\bigg(\sum_{n=1}^N C_{{\rm{TC}},n}^{**}(G_{n})\bigg).
\end{align}
Here, $C_{{\rm{TC}},n}^{**}(G_{n})$ denotes the minimum total energy cost under given $G_n$ in each time slot $n \in \{1,\ldots,N\}$ in stage II (real-time), which is achieved via optimizing the real-time energy group buying (i.e., $G_{n}^B$ and $G_{n}^S$) as well as the wireless load sharing (i.e., $\{x_{k_i,n}\}$), i.e.,
\begin{align}
C_{{\rm TC},n}^{**}(G_{n})  = &\min\limits_{G^B_{n},G^S_{n},\{x_{k_i,n}\}} C_{{\rm TC},n}(G_{n},G_{n}^B, G_{n}^S)  \label{eqn:problem2}\\
\mathrm{s.t.}~&
D_{k_i,n} - x_{k_i,n} + x_{k_{\bar \imath},n} \le D^{\max}_{k_i}, \nonumber\\
&~~~~~~~~~~\forall k\in\{1,\ldots,K\},i\in\{1,2\}\label{eqn:problem2:1}\\
&G^B_{n}\ge 0,G^S_{n} \ge 0, x_{k_i,n} \ge 0,\nonumber\\&~~~~~~~~~~ \forall k\in\{1,\ldots,K\},i\in\{1,2\}\label{eqn:problem2:3}\\
&(\ref{eqn:15}).\nonumber
\end{align}

It is worth noting that since we focus on the total energy cost in this section, only energy group buying decision variables $\{G_{n}\}$, $\{G_{n}^B\}$, and $\{G_{n}^S\}$ are of interest. In other words, under given energy group buying, any individual energy purchase variables $\{G_{i,n}\}$, $\{G_{i,n}^B\}$, and $\{G_{i,n}^S\}$ satisfying (\ref{eqn:cooperative1}), (\ref{eqn:cooperative2}), and (\ref{eqn:cooperative3}) are feasible to achieve the same total energy cost.

In the following, we solve the two-stage stochastic programming problem via the backward induction by first investigating stage II under any given day-ahead energy group buying $\{G_{n}\}$ and then studying stage I.

\subsection{Real-Time Energy Group Buying and Load Sharing in Stage II}

First, we consider the real-time energy group buying and wireless load sharing at the two MNOs in stage II under any given day-ahead energy group buying $\{G_n\}$. Since the real-time decisions are made from one slot to another independently, we focus on solving problem (\ref{eqn:problem2}) for a particular time slot $n\in\{1,\ldots,N\}$ without loss of generality.

Problem (\ref{eqn:problem2}) is not a convex optimization problem, since the constraint in (\ref{eqn:15}) is non-convex due to the non-convexity of the energy consumption function $P_{k_i}\left(\cdot\right)$'s in (\ref{eq1}). Despite this, we can still find the optimal solution to this problem by exploiting its specific structure. In the following, we first derive the optimal load sharing solution and then obtain the optimal real-time energy group buying solution.

\subsubsection{Optimal Load Sharing Solution}

It is observed that the objective of problem (\ref{eqn:problem2}) (i.e., the total energy cost of the two MNOs) is monotonically increasing as a function of the total energy demand of the two MNOs (i.e., $\sum_{i=1}^2 \sum_{k=1}^{K} P_{k_i}\left(D_{k_i,n}-x_{k_i,n}+x_{k_{\bar \imath},n}\right)$). As a result, the optimal load sharing solution can be obtained by minimizing the total energy demand of the two MNOs, i.e.,
\begin{align}
\min\limits_{\{x_{k_i,n}\ge 0\}} ~&\sum_{i=1}^2\sum_{k=1}^K P_{k_i}\left(D_{k_i,n}-x_{k_i,n}+x_{k_{\bar \imath},n}\right)\nonumber\\
\mathrm{s.t.}~~
&D_{k_i,n} - x_{k_i,n} + x_{k_{\bar \imath},n} \le D^{\max}_{k_i},~~\nonumber\\&~~~~~~~~~~~\forall k\in\{1,\ldots,K\},i\in\{1,2\}.\label{eqn:problem19:original}
\end{align}
Since $x_{k_i,n}$'s are decoupling over different BS pair $k$'s in both the objective function and constraints, problem (\ref{eqn:problem19:original}) can be decomposed into $K$ subproblems as follows, each corresponding to the total energy consumption minimization for the $k$th paired BSs, $k\in\{1,\ldots,K\}$.
\begin{align}
\min\limits_{\{x_{k_i,n}\ge 0\}} ~&\sum_{i=1}^2 P_{k_i}\left(D_{k_i,n}-x_{k_i,n}+x_{k_{\bar \imath},n}\right)\nonumber\\
\mathrm{s.t.}~~~
&D_{k_i,n} - x_{k_i,n} + x_{k_{\bar \imath},n} \le D^{\max}_{k_i},~~\forall i\in\{1,2\}\label{eqn:problem19}
\end{align}

By solving problem (\ref{eqn:problem19}), we have the following proposition.

\begin{proposition}\label{proposition:3}
The optimal load sharing solutions between the two MNOs' $k$th paired BSs, denoted by $x_{k_1,n}^{**}$ and $x_{k_2,n}^{**}$, $k\in\{1,\ldots,K\}$, are given as follows depending on the BSs' loads.
 \begin{itemize}
 \item {\it Case I}: any of the two BSs can support their total wireless traffic load, i.e., $D^{\max}_{k_i} \ge D_{k_1,n} + D_{k_2,n},\forall i\in\{1,2\}$. In this case, suppose that the total energy consumption with BS $k_i$ sleeping is no higher than that with BS $k_{\bar \imath}$ sleeping (i.e., $a_{k_{\bar\imath}} (D_{k_1,n} + D_{k_2,n}) + b_{k_{\bar\imath}} + c_{k_i} \le a_{k_i} (D_{k_1,n} + D_{k_2,n}) + b_{k_i} + c_{k_{\bar\imath}}$). Then it is optimal for BS $k_i$ to offload all its traffic to BS $k_{\bar \imath}$, i.e., $x_{k_i,n}^{**}=D_{k_i,n}$ and $x_{k_{\bar\imath},n}^{**}=0$, such that BS $k_i$ is turned into sleep mode.

\item {\it Case II}: only BS $k_{\bar\imath}$ can support their total wireless traffic load while the other BS $k_i$ cannot, i.e., $D^{\max}_{k_i} \le D_{k_1,n} + D_{k_2,n} \le D^{\max}_{k_{\bar \imath}}$. In this case, if the total energy consumption with BS $k_i$ sleeping is lower than that with both BSs active and BS $k_{\bar \imath}$ offloading the maximum traffic to BS $k_i$, i.e., $a_{k_{\bar\imath}} (D_{k_1,n} + D_{k_2,n}) + b_{k_{\bar\imath}} + c_{k_i} < a_{k_i} D^{\max}_{k_i} + b_{k_i} + a_{k_{\bar\imath}} (D_{k_1,n} + D_{k_2,n} - D^{\max}_{k_i}) + b_{k_{\bar\imath}}$, then it is optimal for BS $k_i$ to offload all its traffic to BS $k_{\bar \imath}$, i.e., $x_{k_i,n}^{**}= D_{k_i,n}$ and $x_{k_{\bar\imath},n}^{**}=0$, such that BS $k_i$ can be turned into sleep mode. Otherwise, it is optimal for BS $k_{\bar \imath}$ to offload the maximum traffic to BS $k_i$, i.e.,
    $x_{k_i,n}^{**}= 0$ and $x_{k_{\bar\imath},n}^{**}=D^{\max}_{k_{\bar\imath}} - D_{k_i,n}$, with both BSs active.

\item {\it Case III}: neither of the two BSs can support their total data traffic load, i.e., $D_{k_1,n} + D_{k_2,n} > D^{\max}_{k_i}, \forall \in\{1,2\}$. In this case, suppose that BS $k_i$ is no more energy efficient than BS $k_{\bar\imath}$, i.e., $a_{k_i} \ge a_{k_{\bar\imath}}$, then it is optimal for BS $k_i$ to offload the maximum traffic to BS $k_{\bar \imath}$, i.e., $x_{k_i,n}^{**}=D^{\max}_{k_{\bar\imath}} - D_{k_i,n}$ and $x_{k_{\bar\imath},n}^{**}=0$, with both BSs active.
\end{itemize}
\end{proposition}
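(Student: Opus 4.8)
The plan is to exploit the fact that, however the traffic is offloaded between the two paired BSs, the total served traffic is conserved: from \eqref{eqn:d_kin}, the sum $d_{k_1,n} + d_{k_2,n}$ always equals $D_{k_1,n} + D_{k_2,n}$, a constant which I denote $D_{\mathrm{tot}}$. Hence the only real degree of freedom is how this fixed total is split between the two BSs. First I would observe that offloading in both directions at once is never needed: if $x_{k_1,n}>0$ and $x_{k_2,n}>0$, decreasing both by their minimum leaves $d_{k_1,n}$ and $d_{k_2,n}$ (hence the objective) unchanged while preserving feasibility. Consequently every served-traffic split $(d_{k_1,n},d_{k_2,n})$ with $d_{k_1,n}+d_{k_2,n}=D_{\mathrm{tot}}$, $d_{k_i,n}\ge 0$, and $d_{k_i,n}\le D^{\max}_{k_i}$ is achievable, and problem \eqref{eqn:problem19} reduces to a one-dimensional problem over the scalar $d_{k_1,n}$ ranging in $[\max(0,D_{\mathrm{tot}}-D^{\max}_{k_2}),\,\min(D_{\mathrm{tot}},D^{\max}_{k_1})]$.

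Next I would analyze the reduced objective as a function of $d_{k_1,n}$, substituting $d_{k_2,n}=D_{\mathrm{tot}}-d_{k_1,n}$ into \eqref{eq1}. On the open sub-interval where both BSs are active, the total energy equals $(a_{k_1}-a_{k_2})d_{k_1,n}+a_{k_2}D_{\mathrm{tot}}+b_{k_1}+b_{k_2}$, which is affine in $d_{k_1,n}$; at the extreme point $d_{k_1,n}=0$ (resp. $d_{k_1,n}=D_{\mathrm{tot}}$) BS $k_1$ (resp. $k_2$) is idle and sleeps, so the cost there drops by $b_{k_1}-c_{k_1}>0$ (resp. $b_{k_2}-c_{k_2}>0$) relative to the affine value, using $c_{k_i}\ll b_{k_i}$. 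Thus the objective is piecewise-linear with strict downward jumps occurring only at the two endpoints where a BS becomes idle. Since an affine function on an interval attains its minimum at an endpoint, and the sleep jumps only lower the endpoint values, the optimum must lie in the finite candidate set consisting of the two interval endpoints, with the sleep value substituted wherever the corresponding BS is idle. It then remains to enumerate these candidates and compare their costs pairwise.

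The three cases in the statement are governed precisely by which sleep points are feasible, i.e. whether $D_{\mathrm{tot}}\le D^{\max}_{k_i}$. In Case~I both extreme points $d_{k_1,n}=0$ and $d_{k_1,n}=D_{\mathrm{tot}}$ are feasible sleep configurations, so the two candidates are the two single-BS-sleeping solutions and the stated inequality is exactly the comparison of their costs. In Case~II only the sleep point $d_{k_i,n}=0$ is feasible, so the candidates are ``$k_i$ sleeps'' versus the feasible ``both active'' endpoint $d_{k_i,n}=D^{\max}_{k_i}$ (with $k_{\bar\imath}$ filling $k_i$ to its maximum); here I would also note that any other both-active point is dominated by these endpoints by linearity (and that the $d_{k_i,n}\to 0^+$ end is always dominated by the sleep value), so the stated condition is again the correct cost comparison. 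In Case~III neither sleep point is feasible, both BSs must stay active, and the affine objective is minimized at the endpoint dictated by the sign of its slope $a_{k_i}-a_{k_{\bar\imath}}$, yielding the rule that the less energy-efficient BS offloads the maximum possible traffic to the more efficient one.

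The main obstacle is the non-convexity introduced by the sleep mode: $P_{k_i}(\cdot)$ is discontinuous at the origin, so neither convex-optimization machinery nor first-order optimality conditions apply directly to problem \eqref{eqn:problem19}. The key idea that dissolves this difficulty is the traffic-conservation reduction, which collapses the two-dimensional feasible region to a single line segment and renders the objective piecewise-linear with only finitely many candidate minimizers. Once that structural reduction is in place, the remainder is a routine, if slightly tedious, enumeration of endpoint costs across the three feasibility regimes.
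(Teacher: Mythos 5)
Your proposal is correct and ultimately performs the same three pairwise cost comparisons as the paper, but it gets there by a genuinely different, more unified route. The paper's proof (Appendix B) treats each case separately with dominance assertions: in Case I it argues that keeping both BSs active is bad ``since otherwise higher non-transmission power is consumed,'' and in Case II it simply states that ``it can be shown that it suffices'' to compare the two named candidates. Your argument replaces these per-case assertions with a single structural lemma: served traffic is conserved, $d_{k_1,n}+d_{k_2,n}=D_{k_1,n}+D_{k_2,n}$, every feasible split is reachable by one-directional offloading, and on the resulting segment the objective is affine except for strict downward jumps of size $b_{k_i}-c_{k_i}$ at whichever sleep endpoints are feasible; hence the optimum lies in a finite endpoint set, uniformly across all three regimes. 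This is precisely the justification the paper leaves implicit --- for instance, in Case I the non-transmission-power remark alone does not rule out an interior split when $a_{k_1}\neq a_{k_2}$; one additionally needs $a_{k_1}d_1+a_{k_2}d_2\ge \min(a_{k_1},a_{k_2})(d_1+d_2)$, or, as in your argument, endpoint optimality of an affine function. Your reduction also makes the correct offloading amounts transparent (e.g., $x^{**}_{k_i,n}=D^{\max}_{k_{\bar\imath}}-D_{k_{\bar\imath},n}$ in Case III, and $x^{**}_{k_{\bar\imath},n}=D^{\max}_{k_i}-D_{k_i,n}$ in the second branch of Case II), which incidentally flags what appear to be index typos in the proposition statement and in the paper's appendix. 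The trade-off: the paper's case-by-case argument is shorter to state, while your one-dimensional piecewise-linear lemma is fully rigorous, dissolves the non-convexity of $P_{k_i}(\cdot)$ in one stroke, and would extend more readily to richer pairings of BSs.
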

\begin{IEEEproof}
See Appendix \ref{app:B}.
\end{IEEEproof}

\subsubsection{Optimal Real-Time Energy Group Buying Solution}

With the optimal load sharing obtained in Proposition \ref{proposition:3}, it remains to obtain the aggregated real-time energy buying and selling amounts $G_{n}^B$ and $G_n^S$ for the two MNOs. Substituting $\{x_{k_i,n}^{**}\}$ into problem (\ref{eqn:problem2}), then we have
\begin{align}
&C_{{\rm TC},n}^{**}(G_{n})  =
\min\limits_{G^B_{n}\ge 0,G^S_{n}\ge 0}~  \alpha_n G_{n}+ \alpha^{B}_n G^B_{n} - \alpha^{S}_n G^S_{n}  \nonumber\\
&~~\mathrm{s.t.}~ G_{n}+G_{n}^{B} - G_{n}^{S} \ge \sum_{i=1}^2\sum_{k=1}^{K} P_{k_i}\left(D_{k_i,n}-x_{k_i,n}^{**}+x^{**}_{k_{\bar \imath},n}\right).\label{eqn:problem2:2nd}
\end{align}
Problem (\ref{eqn:problem2:2nd}) has a similar structure as problem (\ref{eqn:problem1}), for which we have the following lemma.

\begin{lemma}\label{lemma:2}
The optimal solution to problem (\ref{eqn:problem2:2nd}) is given by
\begin{align}
&G_{n}^{B**}(G_{n}) \nonumber\\=& \max\bigg(\sum_{i=1}^2\sum_{k=1}^{K} P_{k_i}\left(D_{k_i,n}-x_{k_i,n}^{**}+x^{**}_{k_{\bar \imath},n}\right) - G_{n}, 0\bigg) \label{eqn:solution2:1}\\
&G_{n}^{S**}(G_{n}) \nonumber\\=& \max\bigg(G_{n} - \sum_{i=1}^2\sum_{k=1}^{K} P_{k_i}\left(D_{k_i,n}-x_{k_i,n}^{**}+x^{**}_{k_{\bar \imath},n}\right), 0\bigg),\label{eqn:solution2:2}
\end{align}
which achieves the minimum total energy cost of the two MNOs as
\begin{align}
&C_{{\rm{TC}},n}^{**}(G_{n}) = \nonumber\\&
 \left\{ {\begin{array}{*{20}{c}}
\displaystyle {(\alpha_n-\alpha_n^B) G_{n} + \alpha_n^B \sum_{i=1}^2\sum_{k=1}^{K} P_{k_i}\left(D_{k_i,n}-x_{k_i,n}^{**}+x^{**}_{k_{\bar \imath},n}\right),}\\~~~~~~~~ \displaystyle{{\rm if}~ G_{n}  \le \sum_{i=1}^2\sum_{k=1}^{K} P_{k_i}\left(D_{k_i,n}-x_{k_i,n}^{**}+x^{**}_{k_{\bar \imath},n}\right)}\\
\displaystyle
{(\alpha_n-\alpha_n^S) G_{n} + \alpha_n^S\sum_{i=1}^2\sum_{k=1}^{K} P_{k_i}\left(D_{k_i,n}-x_{k_i,n}^{**}+x^{**}_{k_{\bar \imath},n}\right),}\\~~~~~~~~\displaystyle{{\rm if}~G_{n} > \sum_{i=1}^2\sum_{k=1}^{K} P_{k_i}\left(D_{k_i,n}-x_{k_i,n}^{**}+x^{**}_{k_{\bar \imath},n}\right)}.
\end{array}} \right.
\label{eqn:cost:optimal:2}
\end{align}
Here, $C_{{\rm{TC}},n}^{**}(G_{n})$ is convex as a function of $G_{n}$ over $G_{n} \ge 0$.
\end{lemma}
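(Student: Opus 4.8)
The plan is to recognize that, once the optimal load-sharing solution $\{x_{k_i,n}^{**}\}$ from Proposition \ref{proposition:3} is substituted in, problem (\ref{eqn:problem2:2nd}) becomes structurally identical to the single-MNO problem (\ref{eqn:problem1}) already solved in Lemma \ref{lemma:1}. Concretely, I would first introduce the aggregated demand $\Delta_n := \sum_{i=1}^2\sum_{k=1}^{K} P_{k_i}\left(D_{k_i,n}-x_{k_i,n}^{**}+x^{**}_{k_{\bar \imath},n}\right)$ appearing on the right-hand side of the constraint, and observe that it is a \emph{constant} with respect to the remaining decision variables $G_n^B$ and $G_n^S$. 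Indeed, the load-sharing variables were fixed in the preceding step by minimizing the total demand alone (problem (\ref{eqn:problem19:original})), and that minimization does not involve $G_n$, $G_n^B$, or $G_n^S$. Hence what remains is a two-variable linear program in $(G_n^B,G_n^S)$ with the single inequality $G_n + G_n^B - G_n^S \ge \Delta_n$, which mirrors (\ref{eqn:problem1}) with the scalar demand replaced by $\Delta_n$.

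Second, I would solve this LP directly. Because the price ordering $\alpha_n^S < \alpha_n < \alpha_n^B$ makes buying costly and selling cheap, it is never optimal to have $G_n^B>0$ and $G_n^S>0$ simultaneously: any common amount can be cancelled to strictly reduce $\alpha_n^B G_n^B - \alpha_n^S G_n^S$ while keeping the constraint feasible, and an optimal solution must tighten the constraint to equality. This immediately gives the two cases: if $G_n < \Delta_n$, the deficit is covered by buying, so $G_n^{B**}(G_n)=\Delta_n - G_n$ and $G_n^{S**}(G_n)=0$; if $G_n > \Delta_n$, the surplus is sold back, so $G_n^{S**}(G_n)=G_n-\Delta_n$ and $G_n^{B**}(G_n)=0$. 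These are exactly the $\max(\cdot,0)$ expressions in (\ref{eqn:solution2:1}) and (\ref{eqn:solution2:2}).

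Third, I would substitute these trading amounts back into the objective $\alpha_n G_n + \alpha_n^B G_n^B - \alpha_n^S G_n^S$ to obtain the two-piece form (\ref{eqn:cost:optimal:2}): on $G_n \le \Delta_n$ the cost equals $(\alpha_n-\alpha_n^B)G_n + \alpha_n^B\Delta_n$, and on $G_n > \Delta_n$ it equals $(\alpha_n-\alpha_n^S)G_n + \alpha_n^S\Delta_n$. Convexity in $G_n$ then follows exactly as in the discussion after Lemma \ref{lemma:1}: the function is continuous and piecewise linear with a single breakpoint at $G_n=\Delta_n$, its left slope $\alpha_n-\alpha_n^B$ is negative and its right slope $\alpha_n-\alpha_n^S$ is positive by the price ordering, so the slope is non-decreasing and the function is convex over $G_n\ge 0$.

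Since the entire argument is essentially a transcription of Lemma \ref{lemma:1} applied to the aggregated demand, I expect no genuine obstacle. The only point requiring care is the first step, namely verifying that $\Delta_n$ is independent of $G_n$ (and of the trading variables), so that the jointly coupled problem (\ref{eqn:problem2}) legitimately decomposes into a separate load-sharing step followed by the scalar LP. This is precisely what the monotonicity observation made before Proposition \ref{proposition:3} guarantees, since it allows the load sharing to be optimized first and independently of the energy-trading decision.
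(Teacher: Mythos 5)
Your proof is correct and takes essentially the same route as the paper: the paper's own proof of Lemma~\ref{lemma:2} simply states that it follows as in Lemma~\ref{lemma:1}, i.e., solve the scalar LP in $(G_n^B,G_n^S)$ with the aggregated demand held fixed and substitute the solution back into the objective, which is exactly what you do. Your additional details---noting that the price ordering $\alpha_n^S<\alpha_n<\alpha_n^B$ rules out simultaneous buying and selling, and that the resulting cost is piecewise linear with nondecreasing slopes, hence convex---are precisely the steps the paper omits for brevity.
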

\begin{proof}
This lemma can be similarly proved as Lemma \ref{lemma:1}, for which the proof is omitted for brevity.
\end{proof}

From Lemma \ref{lemma:2}, it is observed that the two fully-cooperative MNOs make the real-time energy group buying decision by comparing the aggregated day-ahead energy commitment versus the real-time total energy demand. This is similar to the real-time energy trading for non-cooperatively operated MNOs in Lemma \ref{lemma:1}, except that the total energy demand here is reshaped thanks to the wireless load sharing.

\subsection{Day-Ahead Energy Group Buying in Stage I}

After obtaining the optimal load sharing $\{x_{k_i,n}^{**}\}$ as well as the real-time energy group buying $G^{B**}_{n}(G_{n})$ and $G^{S**}_{n}(G_{n})$ in stage II, we then investigate stage I to obtain the optimal day-ahead energy group buying $\{G_{n}\}$ to solve problem (\ref{eqn:day:ahead2}).

Note that problem (\ref{eqn:day:ahead2}) has a very similar structure as problem (\ref{eqn:eq1}) in the non-cooperative case in Section \ref{sec:Non}. Therefore, problem (\ref{eqn:day:ahead2}) is indeed convex provided that $C_{{\rm{TC}},n}^{**}(G_{n})$ is a convex function in $G_{n} \ge 0$ under any given $\delta^B_n$'s, $\delta^S_n$'s, and $\xi_{k_i,n}$'s, as shown in Lemma \ref{lemma:2}. As a result, problem (\ref{eqn:day:ahead2}) can be similarly solved by the stochastic subgradient method as that for solving problem (\ref{eqn:eq1}) by replacing $\{G_{i,n}\}$ and $\left\{\zeta_{i,n} = \sum\limits_{k=1}^{K} (a_{k_i}D_{k_i,n}+ b_{k_i})\right\}$ in Section \ref{sec:Non:B} as  $\{G_{n}\}$ and $\left\{\zeta_{n} = \sum\limits_{i=1}^2\sum\limits_{k=1}^{K} P_{k_i}\left(D_{k_i,n}-x_{k_i,n}^{**}+x^{**}_{k_{\bar \imath},n}\right)\right\}$ here, respectively.{\footnote{Note that during computing the approximate subgradient of $\mathbb{E}\left(C_{{\rm TC},n}^{**}(G_{n})\right)$, we need to update the optimal load sharing $\{x_{k_i,n}^{**}\}$ for each of the $M$ randomly generated samples of wireless traffic prediction errors $\left\{\xi^{(m)}_{k_i,n}\right\}, m \in\{ 1,\ldots,M\}$.}} For brevity, we omit the detailed derivation of solving problem (\ref{eqn:day:ahead2}), and let its optimal solution be denoted by $\{G_{n}^{**}\}$. Therefore, the optimal day-ahead energy group buying solution has been obtained.

As compared to the optimal day-ahead energy commitment in the non-cooperative case in Section \ref{sec:Non}, the optimal day-ahead energy group buying here has taken into account the potential energy demand reduction due to the wireless load sharing, thus reducing the energy commitment and leading to significant total energy cost saving, as will be shown in the simulation results in Section \ref{sec:Numerical}.

\section{Repeated Nash Bargaining for Selfish MNOs' Cooperation}\label{sec:Bargaining}

In the previous section, we have considered the fully-cooperative case when the two MNOs minimize their total energy cost with the energy group buying and wireless load sharing implemented, where one MNO offloading more traffic may achieve large energy cost saving while the other MNO may be burdened with extra traffic. In practice, MNOs may belong to different entities and are self-interested in minimizing their individual energy costs. In this case, the above full cooperation with energy group buying and load sharing does not apply, and the two MNOs will reach the non-cooperative benchmark.

To overcome this issue, in this section we develop a novel incentive scheme, namely repeated Nash bargaining scheme,{\footnote{Please refer to \cite{NashTheBargaining1950} for an introduction about the two-person bargaining framework established by Nash.}} to motivate the collaboration between the two MNOs. In this scheme, the two MNOs negotiate about the energy group buying and load sharing, such that the energy cost reduction can be fairly shared between them. The negotiation between MNOs is assisted by the aggregator and implemented repeatedly both in day-ahead and at each time slot in real-time. Specifically, to maximize the flexibility of the MNOs' energy cost sharing and thus improve their cooperation incentive, we consider that the two MNOs implement monetary payments with each other at each time slot in real-time. Let the payment from MNO $i\in\{1,2\}$ to the other MNO $\bar\imath$ be denoted by $\pi_{i,n} \ge 0$ in time slot $n\in\{1,\ldots,N\}$.{\footnote{Note that the monetary payments have been commonly adopted in the literature to facilitate the revenue/payoff sharing among different parties (see e.g. \cite{GaoBargaining2014}). Here, the money exchange between the two MNOs can be performed daily by aggregating the payments over the whole day of $N$ time slots, thus minimizing the implementation complexity.}} Accordingly, the energy cost of MNO $i$ in time slot $n$ can be re-expressed as
\begin{align}
&\bar C_{i,n}(G_{i,n},G_{i,n}^B, G_{i,n}^S, \{\pi_{i,n}\}) = \alpha_n G_{i,n}+ \alpha^{B}_n G^B_{i,n} \nonumber\\&~~~~- \alpha^{S}_n G^S_{i,n} + \pi_{i,n} - \pi_{\bar\imath,n}, i\in\{1,2\},n\in\{1,\ldots,N\}.\label{eqn:barC}
\end{align}
With the monetary payments, we then describe the repeated bargaining between the two MNOs in detail as follows.
\begin{itemize}
\item {\it Stage I (day-ahead bargaining):} The two MNOs negotiate about the day-ahead energy group buying (i.e., $\{G_{n}\}$) and how to share the aggregated energy commitments between them (i.e., $\{G_{i,n}\}$ with $G_{1,n} + G_{2,n} = G_{n}, \forall n\in\{1,\ldots,N\}$ in (\ref{eqn:cooperative1})), by taking into account the potential real-time cooperation benefit. Suppose that the expected sum energy cost of MNO $i$ over the $N$ time slots is denoted by $\mathbb{E}\left(\sum\limits_{n=1}^N \bar C^{\star}_{i,n}(\{G_{i,n}\})\right)$, where $\bar C^{\star}_{i,n}(\{G_{i,n}\})$ represents the resulting energy cost of MNO $i$ in time slot $n$ under given $\{G_{i,n}\}$, which is based on the real-time bargaining at stage II as will be specified later (see (\ref{eqn:36})).{\footnote{We use the superscript $\star$ to denote the repeated Nash bargaining solution in this section, to distinguish from the superscript $*$ for the non-cooperative case in Section \ref{sec:Non} and $**$ for the fully-cooperative case in Section \ref{sec:Fully}.}} Then the payoff of MNO $i$ in day-ahead is defined as the expected energy cost reduction achieved by the energy group buying and wireless load sharing, i.e.,
    \begin{align}
    &U_i^{\rm Day}(\{G_{i,n}\}) = \mathbb{E}\bigg(\sum_{n=1}^NC_{i,n}^*(G^*_{i,n}) \bigg) \nonumber\\&~~~~- \mathbb{E}\bigg(\sum\limits_{n=1}^N \bar C^{\star}_{i,n}(\{G_{i,n}\})\bigg), i\in\{1,2\},\label{eqn:payoff:day}
    \end{align}
    where $\mathbb{E}\left(\sum_{n=1}^NC_{i,n}^*(G^*_{i,n}) \right)$ denotes the minimum expected sum energy cost of MNO $i$ achieved by the optimal non-cooperative energy purchase in Section \ref{sec:Non}.
\item {\it Stage II (real-time bargaining from time slot $1$ to $N$):} Under given day-ahead energy commitment (i.e., $\{G_{i,n}\}$) and  the corresponding day-ahead energy group buying (i.e., $\{G_{n} = G_{1,n} + G_{2,n}\}$), in real-time the two MNOs negotiate independently over each time slot $n\in\{1,\ldots,N\}$ about
    \begin{itemize}
    \item the real-time energy group buying (i.e., $G_{n}^B$ and $G_{n}^S$),
    \item how to share the aggregated energy trading amounts (i.e., $\{G_{i,n}^B\}$ and $\{G_{i,n}^S\}$ with $G^B_{1,n} + G^B_{2,n} = G^B_{n}$ and $G^S_{1,n} + G^S_{2,n} = G^S_{n}$ in (\ref{eqn:cooperative2}) and (\ref{eqn:cooperative3})),
    \item the wireless load sharing (i.e., $\{x_{k_i,n}\}$),
    \item and the inter-MNO payment (i.e., $\{\pi_{i,n}\}$).
    \end{itemize}
     The payoff of MNO $i$ in time slot $n$ is defined as the energy cost reduction, i.e.,
    \begin{align}
    &U_{i,n}^{\rm Real}(G_{i,n},G_{i,n}^B, G_{i,n}^S, \{\pi_{i,n}\}) \nonumber\\
    = &C_{i,n}^*(G_{i,n}) - \bar C_{i,n}(G_{i,n},G_{i,n}^B, G_{i,n}^S, \{\pi_{i,n}\}), i\in\{1,2\},
    \end{align}
    where $\bar C_{i,n}(G_{i,n},G_{i,n}^B, G_{i,n}^S, \{\pi_{i,n}\})$ denotes the energy cost of MNO $i$ with the real-time energy group buying and load sharing employed as given in (\ref{eqn:barC}), and $C_{i,n}^*(G_{i,n})$ denotes the minimum energy cost of MNO $i$ achieved by the optimal non-cooperative real-time energy trading in Section \ref{sec:Non:A}.
\end{itemize}

In the following, we use the backward induction together with the Nash bargaining solution \cite{NashTheBargaining1950} to address the above repeated bargaining problem, in which we first investigate the real-time bargaining in stage II under given day-ahead energy group buying, and then study the day-ahead bargaining in stage I.

\subsection{Real-Time Bargaining in Stage II}

First, we investigate the real-time bargaining between the two MNOs in stage II, under given day-ahead energy commitments $\{G_{i,n}\}$ and the according day-ahead energy group buying $\{G_{n} = G_{1,n} + G_{2,n}\}$. Since the MNOs negotiate with each other independently over different time slots, we consider a particular time slot $n\in\{1,\ldots,N\}$ without loss of generality.

In particular, each MNO $i$ aims to maximize its own energy cost reduction, or equivalently, the payoff $U_{i,n}^{\rm Real}(G_{i,n},G_{i,n}^B, G_{i,n}^S, \{\pi_{i,n}\})$. Such a negotiation can be formulated as a two-person bargaining problem \cite{NashTheBargaining1950}, in which MNO 1 and MNO 2 correspond to the two players of interest. The two MNOs can either reach an agreement on the real-time energy group buying and load sharing to achieve the payoffs $\left\{U_{i,n}^{\rm Real}(G_{i,n},G_{i,n}^B, G_{i,n}^S, \{\pi_{i,n}\})\right\}$, or fail to reach any agreements (i.e., operating as in the non-cooperative benchmark) to result in zero payoffs.

The reasonable bargaining solution has been characterized by Nash in \cite{NashTheBargaining1950} to achieve Pareto efficiency and symmetric fairness. Such a solution, termed the Nash bargaining solution, can be obtained by solving the following problem
\begin{align}
&\max_{G_{n}^B,G_{n}^S,\{G_{i,n}^B, G_{i,n}^S,x_{k_i,n}, \pi_{i,n}\}} ~
U_{1,n}^{\rm Real}(G_{1,n},G_{1,n}^B, G_{1,n}^S, \{\pi_{i,n}\}) \nonumber\\&~~~~~~~~~~~~~~~~~~~~~~~\cdot U_{2,n}^{\rm Real}(G_{2,n},G_{2,n}^B, G_{2,n}^S, \{\pi_{i,n}\}) \label{eqn:NBS}\\
&~~~\mathrm{s.t.}~ (\ref{eqn:cooperative2}),~(\ref{eqn:cooperative3}),~(\ref{eqn:15}),~(\ref{eqn:problem2:1}),~{\rm and}~(\ref{eqn:problem2:3})\nonumber\\
&~~~~~~~~U_{i,n}^{\rm Real}(G_{i,n},G_{i,n}^B, G_{i,n}^S, \{\pi_{i,n}\}) \ge 0, \forall i\in\{1,2\}\label{eqn:NBS:1}\\
&~~~~~~~~G_{i,n}^B\ge 0, G_{i,n}^S\ge 0,\pi_{i,n} \ge 0, \forall i\in\{1,2\}.\label{eqn:NBS:2}
\end{align}
Here, the constraints (\ref{eqn:cooperative2}) and (\ref{eqn:cooperative3}) specify the sharing of real-time energy group buying between MNOs. The constraints (\ref{eqn:15}), (\ref{eqn:problem2:1}), and (\ref{eqn:problem2:3}) specify the feasible sets of the real-time energy group buying $G_{n}^B$ and $G_{n}^S$ and load sharing $\{x_{k_i,n}\}$. The constraints in (\ref{eqn:NBS:1}) ensure that in the case of agreement, the payoff at each MNO is no smaller than that without any agreement.

Though problem (\ref{eqn:NBS}) is non-convex due to the non-convexity of the constraint (\ref{eqn:15}), it can still be optimally solved by exploiting the optimal solution to problem (\ref{eqn:problem2}) for the full cooperation. Let the optimal solution to problem (\ref{eqn:NBS}) be denoted by $G_{n}^{B\star},G_{n}^{S\star},\{G_{i,n}^{B\star}\}, \{G_{i,n}^{S\star}\},\{x^\star_{k_i,n}\}$, and $\{\pi^\star_{i,n}\}$. Then the following proposition follows.

\begin{proposition}\label{proposition:5.1}
The optimal real-time energy group buying and load sharing solutions to problem (\ref{eqn:NBS}) are given by $G_{n}^{B\star} = G^{B**}_{n}(G_{n})$, $G_{n}^{S\star} = G^{S**}_{n}(G_{n})$, and $x^\star_{k_i,n} = x^{**}_{k_i,n}, \forall k\in\{1,\ldots,K\},i\in\{1,2\}$, where $G^{B**}_{n}(G_{n})$, $G^{S**}_{n}(G_{n})$, and $x^{**}_{k_i,n}$ are the optimal solution to problem (\ref{eqn:problem2}) under given $G_n= G_{1,n} + G_{2,n}$. Substituting $G_{n}^{B\star},G_{n}^{S\star}$, and $\{x^\star_{k_i,n}\}$ into problem (\ref{eqn:NBS}), then the optimization of $\{G_{i,n}^{B\star}\}, \{G_{i,n}^{S\star}\}$, and $\{\pi^\star_{i,n}\}$ corresponds to a convex optimization problem, which can be solved via standard convex optimization techniques such as CVX \cite{CVX}. Accordingly, the resulting energy cost of MNO $i$ is given by
\begin{align}
&\bar C^{\star}_{i,n}(\{G_{i,n}\})  = \bar C_{i,n}(G_{i,n},G_{i,n}^{B\star}, G_{i,n}^{S\star}, \{\pi_{i,n}^\star\})  \nonumber\\&~~= \frac{C_{{\rm TC},n}^{**} (G_{n})}{2} + \frac{C_{i,n}^{*} (G_{i,n})-C_{\bar\imath,n}^{*} (G_{\bar\imath,n})}{2}, i\in\{1,2\}, \label{eqn:36}
\end{align}
where $C^{**}_{{\rm TC},n}(G_n)$ given in (\ref{eqn:cost:optimal:2}) is the minimum total energy cost of the two MNOs in the full cooperation under the day-ahead energy group buying $G_n$, and $C_{i,n}^{*} (G_{i,n})$ is the the minimum energy cost of MNO $i$ in the non-cooperative benchmark under the day-ahead energy commitment $G_{i,n}$.
\end{proposition}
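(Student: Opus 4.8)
The plan is to exploit the transferable-utility structure that the monetary payments $\{\pi_{i,n}\}$ induce, which decouples the non-convex Nash product maximization in (\ref{eqn:NBS}) into a total-surplus maximization (solved already by the full-cooperation result) followed by an equal split of that surplus. First I would add the two payoffs $U_{1,n}^{\rm Real}$ and $U_{2,n}^{\rm Real}$. The payment terms $\pi_{i,n}-\pi_{\bar\imath,n}$ cancel pairwise, and by the group-buying constraints (\ref{eqn:cooperative1})--(\ref{eqn:cooperative3}) the individual trading variables aggregate into $G_n,G_n^B,G_n^S$, giving
\begin{align}
U_{1,n}^{\rm Real} + U_{2,n}^{\rm Real} = \sum_{i=1}^2 C_{i,n}^*(G_{i,n}) - C_{{\rm TC},n}(G_n, G_n^B, G_n^S). \nonumber
\end{align}
The first term is a constant under the given day-ahead commitments, so the total surplus is maximized exactly when $C_{{\rm TC},n}(G_n,G_n^B,G_n^S)$ is minimized over $\{x_{k_i,n}\},G_n^B,G_n^S$ subject to (\ref{eqn:15}), (\ref{eqn:problem2:1}), and (\ref{eqn:problem2:3}), which is precisely problem (\ref{eqn:problem2}). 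Its minimizer $x_{k_i,n}^{**}$, $G_n^{B**}(G_n)$, $G_n^{S**}(G_n)$ from Proposition \ref{proposition:3} and Lemma \ref{lemma:2} thus yields the maximal surplus $S^{\max} = \sum_i C_{i,n}^*(G_{i,n}) - C_{{\rm TC},n}^{**}(G_n)$.

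Next I would argue that the payments give full freedom to redistribute this surplus. Since only the difference $\pi_{1,n}-\pi_{2,n}$ enters the payoffs and each $\pi_{i,n}\ge 0$ may be taken arbitrarily large, for any feasible aggregate decision yielding surplus $S$ one can realize every split $(U_1,U_2)$ with $U_1+U_2=S$ and $U_i\ge 0$. Hence the projected feasible payoff set is the triangle $\{U_i\ge 0,\ U_1+U_2\le S^{\max}\}$, where the disagreement point $(0,0)$ is attainable by choosing no load sharing, the individually-optimal real-time trading, and zero payments, so that $S^{\max}\ge 0$. Maximizing the Nash product $U_1 U_2$ over this triangle, by AM--GM (equivalently the first-order conditions), is attained on the frontier $U_1+U_2=S^{\max}$ at the symmetric point $U_{1,n}^{\rm Real}=U_{2,n}^{\rm Real}=S^{\max}/2$. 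This simultaneously forces the aggregate solution to coincide with the full-cooperation solution (the first claim) and fixes the equal surplus split.

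Finally, I would back out the cost from $U_i^\star = C_{i,n}^*(G_{i,n}) - \bar C_{i,n}^\star = S^{\max}/2$, which upon substituting $S^{\max}$ and rearranging gives exactly (\ref{eqn:36}). For the asserted convexity, once the aggregate quantities $G_n^{B**},G_n^{S**},\{x_{k_i,n}^{**}\}$ are fixed, each $U_{i,n}^{\rm Real}$ is affine in the remaining variables $\{G_{i,n}^B,G_{i,n}^S,\pi_{i,n}\}$ over a polyhedral feasible set, so maximizing $\log U_{1,n}^{\rm Real}+\log U_{2,n}^{\rm Real}$ is a concave maximization equivalent to the Nash product, justifying the use of CVX. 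The main obstacle is making the reduction step rigorous: problem (\ref{eqn:NBS}) is non-convex because of (\ref{eqn:15}), so one must justify that the transferable-utility argument \emph{legitimately} decouples the surplus maximization (which inherits the already-solved non-convex full-cooperation structure) from the convex splitting problem, rather than attacking (\ref{eqn:NBS}) head-on; the key technical point is verifying that the payment variables render the payoff region comprehensive with a linear, slope-$(-1)$ Pareto frontier, so that surplus maximization and fair splitting genuinely separate.
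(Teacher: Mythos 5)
Your proposal is correct and follows essentially the same route as the paper's proof: both exploit the transferable-utility structure of the inter-MNO payments to show that the Nash product maximization in (\ref{eqn:NBS}) decouples into total-cost minimization (exactly problem (\ref{eqn:problem2}), yielding the full-cooperation solution) followed by an equal split of the surplus, which the paper carries out as a small convex program in $\{\bar C_{i,n}\}$ with constraint $\bar C_{1,n}+\bar C_{2,n}=C_{{\rm TC},n}^{**}(G_n)$ and you carry out via AM--GM on the linear Pareto frontier. Your explicit cancellation of the payment terms and the logarithmic-transformation justification of convexity simply spell out steps the paper asserts more tersely.
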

\begin{IEEEproof}
See Appendix \ref{app:D}.
\end{IEEEproof}

\subsection{Day-Ahead Bargaining in Stage I}\label{sec:Bargaining:B}

After obtaining the real-time Nash bargaining solution in stage II, we next investigate the day-ahead bargaining between the two MNOs in stage I, where they negotiate about the day-ahead energy group buying (i.e., $\{G_{n}\}$) as well as the commitment sharing (i.e., $\{G_{i,n}\}$), so as to minimize their individual expected sum energy costs over the $N$ time slots. Such a negotiation is formulated as another two-person bargaining problem as follows.
\begin{itemize}
\item {\it Day-ahead energy group buying agreement.} When the two MNOs reach an agreement on the day-ahead energy group buying, they agree to make an aggregated energy commitment to minimize their expected total energy cost as in the case of full cooperation, i.e., $G_n^{\star} = G_{n}^{**}, \forall n\in\{1,\ldots,N\}$ and accordingly
    \begin{align*}
    G_{1,n} + G_{2,n} = G_{n}^{**}, \forall n\in\{1,\ldots,N\}.
    \end{align*}
    In this case, by substituting (\ref{eqn:36}) into (\ref{eqn:payoff:day}) and replacing $\{G_{n}\}$ as $\{G_{n}^{**}\}$, the payoff of MNO $i\in\{1,2\}$ in time slot $n\in\{1,\ldots,N\}$ can be re-expressed as
    \begin{align}
    &U_i^{\rm Day}(\{G_{i,n}\}) \nonumber\\ = &\mathbb{E}\bigg(\sum_{n=1}^NC_{i,n}^*(G^*_{i,n}) \bigg) \nonumber\\&- \mathbb{E}\bigg(\sum\limits_{n=1}^N \left(\frac{C_{{\rm TC},n}^{**} (G^{**}_{n})}{2} + \frac{C_{i,n}^{*} (G_{i,n})-C_{\bar\imath,n}^{*} (G_{\bar\imath,n})}{2}\right)\bigg) \nonumber\\
     = &\Upsilon_i + \frac{1}{2}\sum\limits_{n=1}^N \bigg(\mathbb{E}\big(C_{\bar\imath,n}^{*} (G_{\bar\imath,n})\big) - \mathbb{E}\big(C_{i,n}^{*} (G_{i,n})\big)\bigg),\label{eqn:payoff:day:2}
    \end{align}
    where $\displaystyle\Upsilon_i = \mathbb{E}\bigg(\sum\limits_{n=1}^NC_{i,n}^*(G^*_{i,n}) \bigg) - \frac{1}{2}\mathbb{E}\bigg(\sum\limits_{n=1}^N C_{{\rm TC},n}^{**} (G^{**}_{n})\bigg)$ is a constant.
\item {\it Day-ahead disagreement.} When the two MNOs fail to reach any agreement in day-ahead, they will operate independently as in the non-cooperative case in both day-ahead and real-time. In this case, both MNOs achieve zero payoffs.
\end{itemize}

For the day-ahead negotiation, the desirable Nash bargaining solution can be obtained by solving the following optimization problem:
\begin{align}
\max\limits_{\{G_{i,n}\}} ~& U_1^{\rm Day}(\{G_{i,n}\})\cdot U_2^{\rm Day}(\{G_{i,n}\})\label{eqn:38}\\
\mathrm{s.t.}~~&U_i^{\rm Day}(\{G_{i,n}\}) \ge 0,\forall i\in\{1,2\}\label{eqn:38:1}\\
& G_{1,n} + G_{2,n} = G_{n}^{**}, \forall n\in\{1,\ldots,N\}.\label{eqn:38:2}
\end{align}
where (\ref{eqn:38:1}) ensures that in the case of agreement, the payoff at each MNO is no smaller than that without any agreements, and (\ref{eqn:38:2}) specifies the MNOs' energy commitment sharing of the day-ahead group buying.

Note that the function $C_{i,n}^{*} (G_{i,n})$'s (and thus $\mathbb{E}\left(C_{i,n}^{*} (G_{i,n})\right)$'s), $\forall i\in\{1,2\},n\in\{1,\ldots,N\}$, are all convex functions (see (\ref{eqn:cost:optimal})), and as a result, problem (\ref{eqn:38}) is non-convex in general. This, together with the fact that $\mathbb{E}\left(C_{i,n}^{*} (G_{i,n})\right)$ cannot be computed exactly, makes problem (\ref{eqn:38}) very difficult to be solved optimally. To overcome this issue, we make the following approximation on $\mathbb{E}\left(C_{i,n}^{*} (G_{i,n})\right)$:
\begin{align}
&\mathbb{E}\left(C_{i,n}^{*} (G_{i,n})\right)  \nonumber\\\approx &
 \left\{ {\begin{array}{*{20}{c}}
\displaystyle {(\alpha_n-\alpha_n^B) G_{i,n} + \alpha_n^B\sum_{k=1}^{K} (a_{k_i} \bar D_{k_i,n}+ b_{k_i}),}\\~~~~~~~~~~~~~~~\displaystyle{{\rm if}~ G_{i,n} \le \sum_{k=1}^{K} (a_{k_i}\bar D_{k_i,n}+ b_{k_i})}\\
\displaystyle
{(\alpha_n-\alpha_n^S) G_{i,n} + \alpha_n^S\sum_{k=1}^{K} (a_{k_i} \bar D_{k_i,n}+ b_{k_i}),}\\~~~~~~~~~~~~~~~\displaystyle{{\rm if}~G_{i,n} > \sum_{k=1}^{K} (a_{k_i}\bar D_{k_i,n}+ b_{k_i})}
\end{array}} \right.
\label{eqn:cost:optimal:app1}\\
\approx & {(\alpha_n-\alpha_n^B) G_{i,n} + \alpha_n^B\sum_{k=1}^{K} (a_{k_i} \bar D_{k_i,n}+ b_{k_i})}.\label{eqn:cost:optimal:app2}
\end{align}
where in (\ref{eqn:cost:optimal:app1}) the predicted wireless traffic $\bar D_{k_i,n}$'s are treated as the exact ones by ignoring the prediction errors, and in (\ref{eqn:cost:optimal:app2}) the case of $G_{i,n} > \sum_{k=1}^{K} (a_{k_i}\bar D_{k_i,n}+ b_{k_i})$ is omitted based on the observation that the day-ahead energy commitment $G_{i,n}$ with load sharing is highly likely to be smaller than the predicted energy demand $\sum_{k=1}^{K} (a_{k_i}\bar D_{k_i,n}+ b_{k_i})$ without load sharing.

Using the approximation in (\ref{eqn:cost:optimal:app2}) to replace $\mathbb{E}\left(C_{i,n}^{*} (G_{i,n})\right), \forall k\in\{1,\ldots,K\},i\in\{1,2\}$, problem (\ref{eqn:38}) then becomes a convex optimization problem, which can thus be solved by standard convex optimization techniques such as CVX \cite{CVX}. Let the solution obtained accordingly at the two MNOs be denoted by $\{G_{i,n}^\star\}$ for the $N$ time slots. By combining $\{G_{i,n}^\star\}$ and $\{G_{n}^\star\}$, the day-ahead bargaining solution has been finally obtained.

Despite that a sub-optimal approach (with the approximation in (\ref{eqn:cost:optimal:app2})) has been used to solve problem (\ref{eqn:38}), we have the following proposition.

\begin{proposition}
Our day-ahead bargaining solution can achieve Pareto optimal energy costs at the two MNOs.
\end{proposition}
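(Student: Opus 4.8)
The plan is to exploit the fact that, under the bargaining solution, the \emph{sum} of the two MNOs' energy costs is always driven down to the full-cooperation minimum, so that the approximation in (\ref{eqn:cost:optimal:app2}) can only affect how this fixed total is split, never the total itself.

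First I would add the two per-slot cost expressions in (\ref{eqn:36}). The anti-symmetric terms $\pm\frac{1}{2}(C_{i,n}^{*}(G_{i,n}) - C_{\bar\imath,n}^{*}(G_{\bar\imath,n}))$ cancel, leaving $\bar C^{\star}_{1,n} + \bar C^{\star}_{2,n} = C_{{\rm TC},n}^{**}(G_n)$. Since the day-ahead agreement fixes $G_n = G_n^{**}$ as in (\ref{eqn:38:2}), taking expectations and summing over $n$ yields a total expected energy cost of $\mathbb{E}(\sum_{n=1}^N C_{{\rm TC},n}^{**}(G_n^{**}))$. By the full-cooperation analysis of Section \ref{sec:Fully}, this is precisely the \emph{minimum} expected total energy cost achievable over all feasible day-ahead commitments, real-time trading, and load-sharing strategies.

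The crucial observation I would stress is that this minimal total is completely independent of the commitment split $\{G_{i,n}\}$ obeying $G_{1,n} + G_{2,n} = G_n^{**}$: the split merely redistributes the fixed total between the two MNOs through the $C_{i,n}^{*}(G_{i,n})$ terms in (\ref{eqn:36}). Hence the approximation in (\ref{eqn:cost:optimal:app2}), which influences only the selection of the split $\{G_{i,n}^\star\}$ when solving (\ref{eqn:38}), merely picks a point \emph{along} the Pareto frontier rather than pushing the solution off it.

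Pareto optimality would then follow by contradiction: if some alternative feasible scheme strictly reduced one MNO's expected cost without raising the other's, its total expected cost would fall strictly below $\mathbb{E}(\sum_{n=1}^N C_{{\rm TC},n}^{**}(G_n^{**}))$, contradicting the optimality of the full-cooperation total established in Section \ref{sec:Fully}. The hard part is seeing past the sub-optimality of the approximation: it trades off \emph{fairness} (the exact Nash split) but never \emph{efficiency}, because efficiency here is entirely pinned down by setting $G_n = G_n^{**}$ and adopting the real-time solution of Proposition \ref{proposition:5.1}, both of which are used regardless of the approximation.
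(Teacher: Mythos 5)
Your proposal is correct and follows essentially the same route as the paper: the paper's (one-sentence) proof likewise argues that fixing $G_n^\star = G_n^{**}$ forces the total expected cost to equal the full-cooperation minimum, so the bargaining solution lies on the Pareto frontier. You merely make explicit the steps the paper leaves implicit (the cancellation of the antisymmetric terms when summing (\ref{eqn:36}), the observation that the approximation in (\ref{eqn:cost:optimal:app2}) only moves the solution along the frontier, and the final contradiction argument), which is a faithful elaboration rather than a different approach.
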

\begin{proof}
This proposition holds due to the fact that by setting the day-ahead energy group buying as $G_{n}^\star = G_{n}^{**},\forall n\in\{1,\ldots,N\}$, the day-ahead bargaining solution can always obtain the minimum expected total energy cost for the two MNOs.
\end{proof}

Besides the Pareto optimality, our day-ahead bargaining solution can also ensure the fairness of energy cost reduction between the two MNOs, as will be shown in the simulation results next.

\section{Simulation Results}\label{sec:Numerical}

In this section, we provide simulation results to compare the performances of the non-cooperative benchmark, the full cooperation, and the repeated Nash bargaining scheme in terms of the MNOs' energy costs. We consider that there are $K = 500$ LTE macro BSs deployed by each of the two MNOs, and the one day horizon of our interest consists of $N=48$ time slots each with a length of 30 minutes. For LTE macro BSs, the maximum average supportable data rate throughput for each BS is set to be $D_{k_i}^{\max} =D^{\max}= 150$ Mbps,{\footnote{The average downlink data rate throughput in one LTE cell is reported to be about 50 Mbps with a bandwidth of 20 MHz and $4\times 4$ multiple-input multiple-output (MIMO) employed. Assuming there are three sectors at each BS, we use $D^{\max} = 150$ Mbps for the convenience of analysis.}} and the power consumption parameters are set as $a_{k_i} = 12$ Watt (W)/Mbps, 
 $b_{k_i} = 1200$ W, and $c_{k_i} = 30$ W, $\forall k\in\{1,\ldots,K\}, i\in\{1,2\}$ \cite{ArnoldPowerConsumptionModeling2010}. 
Furthermore, we use the normalized wireless traffic profiles over one day in Fig. \ref{fig:3}, denoted by $\{\theta_n\}$, as a reference to generate the wireless traffic for the BSs in our simulations. Based on $\{\theta_n\}$, we generate the real-time wireless traffic at BS $k_i$ as $D_{k_i,n} = \chi_{k_i} \theta_n, \forall n\in\{1,\ldots,N\}$, where the amplitude $\chi_{k_i}$ is a random variable as will be specified later. We randomly generate the wireless traffic prediction error $\xi_{k_i,n}$ following a uniform distribution over the interval $[-0.4 D_{k_i,n},0.4 D_{k_i,n}]$. As for energy prices, we use the day-ahead energy price $\{\alpha_n\}$ as well as real-time energy buying and selling prices $\{\alpha_n^B\}$ and $\{\alpha_n^S\}$ over one day as shown in Fig. \ref{fig:4}. The energy price prediction errors $\delta_n^B$ and $\delta_n^S$ are set to be random variables following uniform distributions over the intervals $[-0.1\alpha_n^B, 0.1\alpha_n^B]$ and $[-0.1\alpha_n^S, 0.1\alpha_n^S]$, respectively. Note that the specific wireless traffic and energy price prediction errors are set here for simulation, while consistent results can be obtained under other parameter values. We conduct 100 independent realizations to obtain the results in this section.

\subsection{The Case with Symmetric Wireless Traffic}

In Figs. \ref{fig:5} and \ref{fig:6}, we consider that the two MNOs have symmetric wireless traffic, in which the amplitude of traffic pattern $\chi_{k_i}$'s, $\forall k\in\{1,\ldots,K\}, i\in\{1,2\}$, are generated as uniformly distributed random variables over the continuous interval $[0.1D^{\max}, 0.9D^{\max}]$. Fig. \ref{fig:5} shows the total energy cost of the two MNOs under the full cooperation as compared to that under the non-cooperative benchmark. It is observed that during the lightly-loaded periods (e.g., hours 1-8), more significantly total energy cost reductions are achieved as compared to the heavily-loaded periods (e.g., hours 12-18). This is due to the fact that during lightly-loaded periods, it is more likely that activating one BS is sufficient to satisfy the total wireless traffic demands over each sub-area, and thus the other BS can be turned into sleep mode to save power. It is calculated that over the whole day of $N=48$ time slots, the full cooperation achieves an average of 22.81\% energy cost reduction as compared to the non-cooperative benchmark.

\begin{figure}
\centering
 \epsfxsize=1\linewidth
    \includegraphics[width=7cm]{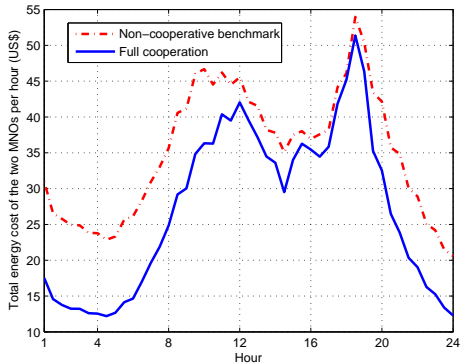}\vspace{-0em}
\caption{The average per-hour total energy cost of the two MNOs over one day under symmetric wireless traffic.} \label{fig:5}\vspace{-0em}
\end{figure}

\begin{figure}
\centering
 \epsfxsize=1\linewidth
    \includegraphics[width=7cm]{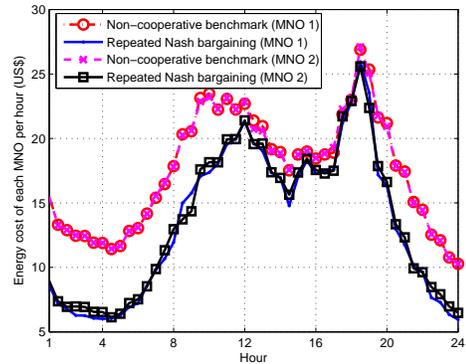}\vspace{-0em}
\caption{The average per-hour energy cost at each of the two MNOs over one day under symmetric wireless traffic.} \label{fig:6}\vspace{-0em}
\end{figure}

Fig. \ref{fig:6} shows the individual energy costs at the two MNOs under the non-cooperative benchmark and the repeated Nash bargaining scheme, respectively. Note that the repeated Nash bargaining scheme achieves the same total energy costs for the two MNOs as the full cooperation, and thus we have not shown the full cooperation in this figure. It is observed that under both the non-cooperative benchmark and the repeated Nash bargaining scheme, the achieved energy costs at the two MNOs have very similar patterns, given the fact that their wireless traffic demands are symmetric. Over the whole day of $N=48$ time slots, the sum energy costs at the two MNOs under the repeated Nash bargaining are calculated to be US\$ 648.88 and  US\$ 646.53, respectively. As compared to the sum energy costs US\$ 839.35 and US\$ 838.86 under the non-cooperative benchmark, 22.69\% and 22.92\% energy cost reductions are achieved by the two MNOs, respectively.

\begin{figure}
\centering
 \epsfxsize=1\linewidth
    \includegraphics[width=7cm]{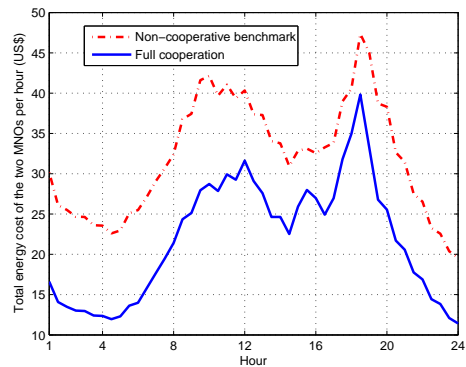}\vspace{-0em}
\caption{The average per-hour total energy cost of the two MNOs over one day under asymmetric wireless traffic.} \label{fig:7}\vspace{-0em}
\end{figure}

\subsection{The Case with Asymmetric Wireless Traffic}

In Figs. \ref{fig:7} and \ref{fig:8}, we consider that the wireless traffic demands at the two MNOs are asymmetric, i.e., the wireless data traffic at MNO 2 is lighter than that at MNO 1. Specifically, we generate the amplitude of traffic pattern ${\chi}_{k_1}$'s for MNO 1, $\forall k\in\{1,\ldots,K\}$, as uniformly distributed random variables over the continuous interval $[0.1D^{\max}, 0.9D^{\max}]$, and $\chi_{k_2}$'s for MNO 2 as uniformly distributed random variables over $[0.05D^{\max}, 0.45D^{\max}]$. Note that each of the two MNOs is assumed to have the same number of $K=500$ BSs, which may correspond to the scenario when they share the same cellular infrastructures (cell sites) \cite{reuters1}. When the two MNOs deploy BSs individually, the MNO with lower traffic load may deploy less BSs with increased coverage. In this case, the implementation of the energy group buying with wireless load sharing requires more sophisticated inter-MNO traffic offloading (e.g., each BS in the lightly loaded MNO may share traffic with multiple neighboring BSs in the other heavily loaded MNO), which, however, is beyond the scope of this paper and will be left for our future work.

Fig. \ref{fig:7} shows the total energy cost of the two MNOs under the non-cooperative benchmark and the full cooperation, respectively. It is observed that full cooperation achieves considerable total cost reduction over the whole day, even during the relatively heavily-loaded hours 12-18. This is due to the fact that MNO 2 is more lightly-loaded and thus its BSs are highly likely to be turned into sleep mode even during the hours 12-18. Over the whole day, 31.72\% total energy cost reduction is achieved by the full cooperation as compared to the non-cooperative benchmark.

\begin{figure}
\centering
 \epsfxsize=1\linewidth
    \includegraphics[width=7cm]{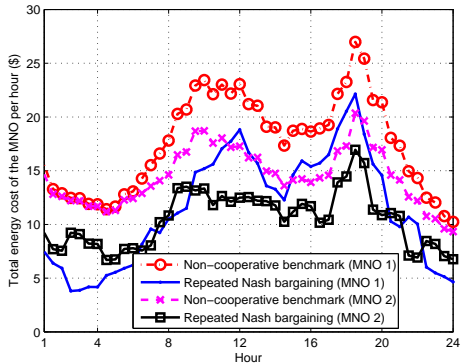}\vspace{-0em}
\caption{The average per-hour energy cost at each of the two MNOs over one day under asymmetric wireless traffic.} \label{fig:8}\vspace{-0em}
\end{figure}

Fig. \ref{fig:8} shows the individual energy costs at the two MNOs under the non-cooperative benchmark and the repeated Nash bargaining scheme, respectively. Under the non-cooperative benchmark, it is observed that MNO 2 has lower energy costs than MNO 1 over all the $48$ time slots, which is consistent with their wireless traffic demands. In contrast, under the repeated Nash bargaining scheme, it is observed that over the $48$ time slots, the energy cost patterns of MNO 1 and MNO 2 fluctuate considerably. For example, MNO 1 achieves more significant energy cost reductions than MNO 2 during hours 1-4, whereas the opposite is true during hour 12. The reason is that during the day-ahead bargaining in stage I (see Section \ref{sec:Bargaining:B}), the two MNOs negotiate to minimize their sum energy costs over $N$ time slots as a whole, and therefore, their energy costs at each time slot may fluctuate. Over the whole day, the sum energy costs at the two MNOs are computed to be US\$ 548.13 and US\$ 500.05 under the repeated Nash bargaining scheme, versus US\$ 840.47 and US\$ 694.75 under the non-cooperative benchmark. Accordingly, 34.78\% and 28.02\% energy cost reductions are achieved by the two MNOs, respectively.

\section{Concluding Remarks}\label{sec:Conclusion}

In this paper, we considered cellular networks in the emerging hybrid electricity market, and proposed a new approach named energy group buying with load sharing to reduce the energy costs of MNOs via exploiting their collaboration benefit. To motivate self-interested MNOs to cooperate, we further developed a novel repeated Nash bargaining scheme, in which the MNOs negotiate about the energy group buying and load sharing in both day-ahead and real-time, so as to fairly share the energy cost reduction. The repeated Nash bargaining scheme is shown to achieve the Pareto optimal and fair energy cost reductions for the MNOs. With the emerging 5G cellular and smart grid systems, it is our hope that this paper can open a new avenue towards how to jointly utilize new communication techniques (e.g., wireless load sharing) and smart grid techniques (e.g., energy group buying) to further improve the energy cost efficiency for future green cellular networks.

Note that our results in this paper are extendable to the scenario with more than two MNOs. When these MNOs belong to the same entity, they should be grouped all together for energy group buying and wireless loading sharing. How to decide the BSs' active/sleep modes and the wireless load sharing among them in each sub-area is a challenging task. On the other hand, when these MNOs belong to different entities, they should further engage in group bargaining \cite{ChaeAGroup} (instead of one-to-one bargaining in this paper) to fairly share the potential collaboration benefit achieved by energy group buying with wireless loading sharing. It is important but challenging to design the bargaining protocol among these MNOs (e.g., sequential or concurrent bargaining \cite{GaoBargaining2014}), depending on individual bargaining power.

It is also worth pointing out that our proposed energy group buying with load sharing can be implemented in emerging wireless heterogeneous networks (HetNets) with densely deployed small cell BSs such as pico- and femto-BSs, by carefully pairing them via taking into account their diverse energy consumption, coverage, and communication capabilities. In particular, wireless load sharing between different types of BSs (e.g., between macro and small cell BSs) can help further improve the energy cost efficiency of HetNets. Since  small cell BSs are often located within the covered sub-areas of macro BSs and need not take care of the coverage, paired small cell BSs from different MNOs can be  turned into sleep mode at the same time by offloading their traffic to macro BSs. How to optimize the inter-MNO wireless load sharing among different BS types together with the energy group buying is an interesting issue worth pursuing in future work.

\appendix
\subsection{Proof of Proposition \ref{proposition:1}}\label{app:A}

Note that $\zeta_{i,n}= \sum_{k=1}^{K} (a_{k_i}\bar D_{k_i,n}+ b_{k_i}) + \Xi_{i,n}$ with $\Xi_{i,n} = \sum_{k=1}^{K} (a_{k_i}\xi_{k_i,n})$. Therefore, given the fact that $\phi_{k_i,n}(\xi_{k_i,n}) = \phi_{k_i,n}(-\xi_{k_i,n})$, it follows that $f_{i,n}(\sum_{k=1}^{K} (a_{k_i}\bar D_{k_i,n}+ b_{k_i}) + \Xi_{i,n}) = f_{i,n}(\sum_{k=1}^{K} (a_{k_i}\bar D_{k_i,n}+ b_{k_i}) -  \Xi_{i,n})$. Accordingly, it follows that
\begin{align}
&\int_{\sum_{k=1}^{K} (a_{k_i}\bar D_{k_i,n}+ b_{k_i}) }^{\infty} f_{i,n}(\zeta_{i,n}) {\rm{d}}\zeta_{i,n} \nonumber\\=&\int_0^{\sum_{k=1}^{K} (a_{k_i}\bar D_{k_i,n}+ b_{k_i}) } f_{i,n}(\zeta_{i,n}) {\rm{d}} \zeta_{i,n}.\label{eqn:43}
\end{align}
From (\ref{eqn:43}) and (\ref{eqn:subg}), we have
\begin{align}
&g_{i,n}\left(\sum_{k=1}^{K} (a_{k_i}\bar D_{k_i,n}+ b_{k_i})\right)  \nonumber\\=& (2\alpha_n-\bar\alpha_n^S-\bar\alpha_n^B) \int_{\sum_{k=1}^{K} (a_{k_i}\bar D_{k_i,n}+ b_{k_i}) }^{\infty} f_{i,n}(\zeta_{i,n}) {\rm{d}}\zeta_{i,n}.\label{eqn:g:G}
\end{align}

From (\ref{eqn:g:G}), it follows that when $\alpha_n = ({\alpha_n^B+\alpha_n^S})/{2}$, we have $g_{i,n}\left(\sum_{k=1}^{K} (a_{k_i}\bar D_{k_i,n}+ b_{k_i})\right) = 0$, and accordingly the optimal solution is given as $G_{i,n}^* = \sum_{k=1}^{K} (a_{k_i}\bar D_{k_i,n}+ b_{k_i})$. Similarly, if $\alpha_n > ({\alpha_n^B+\alpha_n^S})/{2}$ or $\alpha_n < ({\alpha_n^B+\alpha_n^S})/{2}$, then it follows that $g_{i,n}\left(\sum_{k=1}^{K} (a_{k_i}\bar D_{k_i,n}+ b_{k_i})\right) > 0$ or $g_{i,n}\left(\sum_{k=1}^{K} (a_{k_i}\bar D_{k_i,n}+ b_{k_i})\right) < 0$. As a result, we have $G_{i,n}^* < \sum_{k=1}^{K} (a_{k_i}\bar D_{k_i,n}+ b_{k_i})$ or $G_{i,n}^* >\sum_{k=1}^{K} (a_{k_i}\bar D_{k_i,n}+ b_{k_i})$, respectively. Therefore, this proposition is proved.

\subsection{Proof of Proposition \ref{proposition:3}}\label{app:B}

We prove this proposition by considering the three cases one by one.

First, consider Case I when $D^{\max}_{k_i} \ge D_{k_1,n} + D_{k_2,n},\forall i\in\{1,2\}$. In this case, it can be shown that it is optimal for the two BSs to offload their total traffic loads into one of them such that the other can be turned into sleep mode to save energy, since otherwise higher non-transmission power is consumed. Based on this observation, the optimal solution in this case can be obtained by comparing the energy consumption under the two possible solutions. The first possible solution is for BS $k_1$ to offload all the traffic into BS $k_2$ (i.e., $x_{k_1,n} = D_{k_1,n}$ and $x_{k_2,n} = 0$), such that BS $k_1$ can be turned to sleep mode, for which the total energy consumption of the two BSs is $a_{k_2} (D_{k_1,n} + D_{k_2,n}) + b_{k_2} + c_{k_1}$. Similarly, the other possible solution is $x_{k_1,n} = 0$ and $x_{k_2,n} = D_{k_2,n}$, for which the total energy consumption of the two BSs is $a_{k_1} (D_{k_1,n} + D_{k_2,n}) + b_{k_1} + c_{k_2}$. By comparing the two values of energy consumption, the solution in Case I is proved.

Next, consider Case II when $D^{\max}_{k_i} \le D_{k_1,n} + D_{k_2,n} \le D^{\max}_{k_{\bar \imath}}$. In this case, it can be shown that it suffices for us to consider the following two possible solutions. The first possible solution is for BS $k_i$ to offload all its traffic to BS $k_{\bar\imath}$ (i.e., $x_{k_i,n}= D_{k_i,n}$ and $x_{k_{\bar\imath},n}=0$), such that BS $k_i$ can be turned into sleep, for which the total energy consumption of the two BSs is $a_{k_{\bar\imath}} (D_{k_1,n} + D_{k_2,n}) + b_{k_{\bar\imath}} + c_{k_i}$. The other possible solution is for BS $k_{\bar\imath}$ to offload an $x_{k_{\bar\imath},n} = D^{\max}_{k_i} - D_{k_{\bar\imath},n}$ amount of traffic to BS $k_i$ (and $x_{k_i,n} =0$), such that BS $k_i$ is fully loaded, for which the total energy consumption of the two BSs is $a_{k_i} D^{\max}_{k_i} + b_{k_i} + a_{k_{\bar\imath}} (D_{k_1,n} + D_{k_2,n} - D^{\max}_{k_i}) + b_{k_{\bar\imath}}$. By comparing the two possible solutions, we have the optimal solution in Case II.

Finally, consider Case III with $D_{k_1,n} + D_{k_2,n} > D^{\max}_{k_i},\forall i\in\{1,2\}$. In this case, no BSs can be turned into sleep, and thus it is desirable for the two BSs to offload their wireless traffic to the BS with lower transmission power consumption (accordingly, BS $k_i$ with $a_{k_i} \le a_{k_{\bar \imath}}$). Based on this observation, the solution in Case III can be easily obtained.

By combining the above three cases, this proposition is verified.

\subsection{Proof of Proposition \ref{proposition:5.1}}\label{app:D}

First, we show the optimal solution to problem (\ref{eqn:NBS}). It is evident that under $G_n = G_{1,n} + G_{2,n}$, the optimal solution to (\ref{eqn:problem2}) (i.e., $G^{B**}_{n}(G_{n})$, $G^{S**}_{n}(G_{n})$, and $\{x_{k_i,n}^{**}\}$) is also feasible solutions of $G_{n}^B$, $G_{n}^S$, and $\{x_{k_i,n}\}$ to problem (\ref{eqn:NBS}). Furthermore, from (\ref{eqn:cooperative2}) and (\ref{eqn:cooperative3}), we have $\sum_{i=1}^2  \bar C_{i,n}(G_{i,n},G_{i,n}^B, G_{i,n}^S, \{\pi_{i,n}\}) =  C_{{\rm TC},n}(G_{n},G_{n}^B, G_{n}^S)$. By using this together with the fact that the total energy cost can be perfectly shared between the two MNOs due to the inter-MNO payment $\{\pi_{i,n}\}$, it follows that $G^{B**}_{n}(G_{n})$, $G^{S**}_{n}(G_{n})$, and $\{x_{k_i,n}^{**}\}$, which minimize the total energy cost $\sum_{i=1}^2  \bar C_{i,n}(G_{i,n},G_{i,n}^B, G_{i,n}^S, \{\pi_{i,n}\})$ of the two MNOs, are indeed optimal for problem (\ref{eqn:NBS}), i.e., $G_{n}^{B\star} = G^{B**}_{n}(G_n)$, $G_{n}^{S\star} = G^{S**}_{n}(G_{n})$, and $x^\star_{k_i,n} = x^{**}_{k_i,n}, \forall k\in\{1,\ldots,K\},i\in\{1,2\}$. Substituting $G^{B\star}_{n}$, $G^{S\star}_{n}$, and $\{x_{k_i,n}^{\star}\}$ into problem (\ref{eqn:NBS}), it can be easily shown that the optimization of $\{G_{i,n}^{B}\}, \{G_{i,n}^{S}\}$, and $\{\pi_{i,n}\}$ corresponds to a convex optimization problem.

Next, we obtain the resulting energy cost $\{\bar C^{\star}_{i,n}(\{G_{i,n}\})\}$. As shown above, at the Nash bargaining solution, the resulting total energy cost at the two MNOs is equal to that achieved in the full cooperation (i.e., $C^{\star}_{1,n}(\{G_{i,n}\}) + C^{\star}_{2,n}(\{G_{i,n}\}) = C^{**}_{{\rm TC},n}(G_{n})$ with $G_n = G_{1,n} + G_{2,n}$), and the total energy cost $C^{**}_{{\rm TC},n}(G_{n})$ can be perfectly shared between the two MNOs. Therefore, it is evident from problem (\ref{eqn:NBS}) that $\left\{\bar C^{\star}_{i,n}(\{G_{i,n}\})\right\}_{i=1}^2$ can be obtained as the optimal solution of $\{\bar C_{i,n}\}_{i=1}^2$ to the following convex optimization problem:
\begin{align*}
\max_{\{\bar C_{i,n}\}} ~&(C_{1,n}^{*} (G_{1,n}) - \bar C_{1,n}) \cdot (C_{2,n}^{*} (G_{2,n}) - \bar C_{2,n}) \\
\mathrm{s.t.}~&\bar C_{1,n} +\bar C_{2,n} = C_{{\rm TC},n}^{**} (G_{n})\\
&C_{i,n}^{*} (G_{i,n}) -\bar C_{i,n} \ge 0,\forall i\in\{1,2\}.
\end{align*}
Since $C_{{\rm TC},n}^{**} (G_{n}) \le C_{1,n}^{*} (G_{1,n}) + C_{2,n}^{*} (G_{2,n})$ always holds thanks to the cost reduction achieved by load sharing, (\ref{eqn:36}) can be obtained by solving the above problem. Therefore, this proposition is proved.

\end{document}